\newcommand{\bra}[1]{{\left\langle{#1}\right\vert}}
\newcommand{\ket}[1]{{\left\vert{#1}\right\rangle}}
\def\beq{\begin{equation}}
\def\eeq{\end{equation}}
\def\beqa{\begin{eqnarray}}
\def\eeqa{\end{eqnarray}}
\newcommand{\keywords}[1]{\par\addvspace\baselineskip
\noindent\keywordname\enspace\ignorespaces#1}
\begin{document}

\mainmatter  

\title{Composable secure multi-client delegated quantum computation}

\titlerunning{Composable secure multi-client delegated quantum computation}

\author{Monireh Houshmand$^{1,2}$
\and Mahboobeh Houshmand$^{1,2}$\and Si-Hui Tan$^{1,2}$\and Joseph F. Fitzsimons$^{1,2}$
}

\authorrunning{Composable secure multi-client delegated quantum computation}

\institute{$^{1}$Singapore University of Technology and Design, 8 Somapah Road, Singapore 487372,\\
$^2$Centre for Quantum Technologies, National University of Singapore, Block S15, 3 Science Drive 2, Singapore 117543
}

\toctitle{}
\tocauthor{}
\maketitle

\keywords{Delegated quantum computation, Multiple Clients, Composable security, Verification.}
\begin{abstract}
The engineering challenges involved in building large scale quantum computers, and the associated infrastructure requirements, mean that when such devices become available it is likely that this will be in limited numbers and in limited geographic locations. It is likely that many users will need to rely on remote access to delegate their computation to the available hardware. In such a scenario, the privacy and reliability of the delegated computations are important concerns. On the other hand, the distributed nature of modern computations has led to a widespread class of applications in which a group of parties attempt to perform a joint task over their inputs, e.g., in cloud computing. In this paper, we study the multi-client delegated quantum computation problem where we consider the global computation be made up of local computations that are individually decided by the clients. Each client part is kept secret from the server and the other clients. We construct a composable secure multi-client delegated quantum computation scheme from any composable secure single-client delegated quantum computation protocol and quantum authentication codes.
\end{abstract}

\section{Introduction}

The distributed nature of modern computations has led to a widespread class of applications in which a group of parties perform a joint task over their inputs. In these situations, it is very often desirable that the parties' privacy should be preserved due to the fact that they may be mutually distrustful. This has stimulated extensive research in classical cryptography, in the form of secure multi-party computation (MPC)~\cite{yao1982protocols,goldwasser1997multi,pinkas2009secure}. The aim of MPC is to guarantee that the joint public task is carried out correctly and that the parties will not obtain any information from their interactions other than the final output of the computation and what is naturally leaked from it. Some applications of MPC include secure decentralized elections, secure auctions and private data mining. Private function evaluation~\cite{mohassel2014actively,mohassel2013hide,barni2009secure,canetti2001selective} (PFE), is a special case of MPC, where the parties compute a function which is determined privately by one of the parties. The key additional security requirement is that the only permitted leakage to an adversary about the function is the size of the circuit evaluating it.

There has also been work on quantum extensions of MPC, namely secure multi-party quantum computation (MPQC)~\cite{crepeau2002secure,ben2006secure}, based on verifiable quantum secret sharing.
In~\cite{dupuis2010secure}, the problem of privately evaluating some unitary transformation against the quantum version of classical semihonest adversaries over a joint input state held by two parties was addressed. This study was later made secure against more malicious adversaries in~\cite{dupuis2012actively}.

In the schemes mentioned above, all parties are assumed to have quantum computational power. However, as realising scalable and reliable quantum computers is an important challenge, when they can be built, it is likely that there will be a limited number of them available. In this case, clients with restricted quantum capabilities may need to securely delegate their computations to a universal quantum server, in the sense that their inputs, computation and outputs remain hidden from the server. This is known as blind quantum computation (BQC).

BQC was first considered by Childs~\cite{childs2005secure} using the idea of encrypting the input qubits with a quantum one-time pad~\cite{ambainis2000private}. A subsequent research in this field was presented in~\cite{arrighi2006blind}, which was equipped with the possibility of detecting a malicious server. However, these approaches required complex quantum capabilities on the client's side. The complexity of the client's device was reduced in~\cite{broadbent2009universal} to preparing a certain set of single qubit states, and a universal blind quantum computing (UBQC) protocol was presented. This was the first protocol to take advantage of the measurement-based quantum computation model (MBQC)~\cite{raussendorf2001one} for BQC. A different approach for utilising MBQC for UBQC was proposed in~\cite{morimae2013blind} where the client only makes measurements instead of state preparation. The security of this protocol is based on the no-signaling principle~\cite{popescu1994quantum}.
In recent years, various aspects of BQC have been studied \cite{sueki2013ancilla,morimae2012continuous,morimae2015ground,kashefi2016extending,morimae2014verification,dunjko2012blind,morimae2013blind,dunjko2016blind,mantri2016flow,giovannetti2013efficient,mantri2013optimal,perez2015iterated,morimae2011blind,morimae2011blind}, and a number of schemes have been successfully implemented in experiment \cite{barz2012demonstration,barz2013experimental,greganti2016demonstration,huang2017experimental}.

A deeply coupled property with any BQC is verifiability, which is the ability of the client to verify the correctness of delegated quantum computation (DQC). The verification mechanism allows clients to either confirm that the server has followed the instructions of the protocol, or detect if it has deviated from the protocol. There are basically two main approaches for the verification of BQC in the literature, namely trapped-based and stabilizer testing. The former approach typified by~\cite{fitzsimons2012unconditionally} uses hidden traps in the resource state that do not affect the computation. Following that, trap-based techniques have been utilised in~\cite{morimae2014verification,gheorghiu2015robustness,hajduvsek2015device}. The second approach~\cite{morimae2013blind} only requires the server to generate many-qubit states.
Thus, the client only has to verify those states and this goal can be achieved by directly verifying the stabilizers of the states, which is called stabilizer testing~\cite{hayashi2015verifiable,morimae2016measurement}.

However, in the results described above, the security definition under which security of the protocols has been considered was in a stand-alone setting, where the protocol is studied in isolation from the environment. For instance, in~\cite{broadbent2009universal} a protocol is said to be blind if the distribution of the classical and the quantum information obtained by the server is fully determined by the permitted leakage. The permitted leakage is what the server can unavoidably learn from the interactions with the client. Most of the BQC protocols proposed so far have used some form of this definition.
However, the stand-alone definition of blindness does not provide any guarantee on the behaviour of the protocol when it is used as a subroutine in a larger system. A stronger form of cryptographic security is composable security~\cite{canetti2001universally,pfitzmann2001model,maurer2011abstract} which certifies the security of a protocol when run together with many other protocols, even if they are insecure. Specifically, composable security definitions for DQC were introduced in~\cite{dunjko2014composable}. It has been shown that the UBQC protocols in~\cite{fitzsimons2012unconditionally} and~\cite{morimae2013blind} satisfy these definitions.

The problem of secure delegation of a multi-party quantum computation to a server having a universal quantum computer is addressed in~\cite{kashefi2016blind}.
In this setting, a number of computationally weak clients (possibly mistrustful) intend to perform a global unitary operation, known to all of them a priori, on their joint input state with the help of a powerful untrusted server.
In this approach, the private data of the clients remain secret during the protocol and the computation performed is not known to the server.
The security of the protocol against a dishonest server and against dishonest clients, under the asumption of common classical cryptographic constructions, is proved.

In this paper, we examine the multi-client DQC problem from a different perspective. In our setting, we let the global unitary be made up of local unitaries that are individually decided by the clients. All local unitaries of each client are kept secret from the server and the other clients. We construct a composable secure multi-client DQC scheme from any composable secure single-client DQC and quantum authentication codes. Our scheme finds application in, e.g., distributed cloud computing~\cite{armbrust2010view}, in which clients from multiple geographic locations process their data on a  server to perform a global computation.

The paper is organised as follows. In the next section, the needed preliminaries are provided. In Section~\ref{sec:MCDQC}, the proposed protocol for multi-client DQC and the proof of security is presented. The optimisation of the protocol is provided in Section~\ref{sec:RQC} by reducing the quantum communication between the clients and the server. Finally, Section~\ref{sec:conclu} concludes the paper.

\section{Preliminaries}

\label{sec:pre}
\subsection{Composable Security}

Local security definitions consider cryptographic protocol problems in a model where the only involved parties are the actual participants in the protocol, and only a single execution of the protocol happens. However, this model of stand-alone execution does not fully capture the security requirements of a cryptographic protocol which is embedded in a larger scheme and is run concurrently with a number of other protocols. For a cryptographic protocol to be usable in a larger context, the security definitions need to be composable.

A strong and desirable security definition is given by composable security, using three approaches: the universal composability (UC) framework~\cite{canetti2001universally}, reactive simulatability (RS)~\cite{pfitzmann2001model} and the abstract cryptography (AC) developed by Maurer and Renner ~\cite{maurer2011abstract}. These security definitions have also been extended to the quantum setting~\cite{ben2004general,unruh2004simulatable,portmann2015causal}.

In contrast to UC and RS, AC uses a top-down paradigm which first considers a higher level of abstraction, provides definitions and theorems at this level and then proceeds downwards, introducing in each new lower level only the absolutely essential specializations. The fundamental element of AC is a \textit{system}, an abstract object with interfaces through which it interacts with the environment  and/or other systems. Systems are composed of three classes, namely resources, converters, and distinguishers. A \textit{resource} is a system with a set of interfaces. Generally, each interface is devoted to one party, representing the functionalities that are accessible to that party. \textit{Converters} are used to transform one resource into another and capture operations that a player might perform locally at their interface. These are systems with only two interfaces, one is designed as an inside interface and the other one as an outside interface.
The inside interface is connected to the resources and the outside interface receives inputs and produces outputs.
We write either $\pi_i\mathcal{R}$ or $\mathcal{R}\pi_i$ to describe the new resource with the converter $\pi_i$ connected at the interface $i$.

A set of converters which describe how available resources are combined to realise another resource is called a \textit{protocol}. Another type of converter that we need is a \textit{filter}.
For a cryptographic protocol to be useful, it
is not sufficient to provide guarantees on what happens when an adversary
is present, one also has to provide a guarantee on what happens when no
adversary is present. We model this setting by covering the adversarial interface with a filter
that emulates honest behaviour.

The closeness of two resources is defined by a pseudo-metric $d(\cdot,\cdot)$ on the space of resources which is captured by a \textit{distinguisher}. It is a system which having access to all interfaces of one of the two resources, outputs a bit, deciding which resource it is interacting with. The \textit{distinguisher's advantage} is defined as the difference of probability of the correct output when the distinguisher is connected to either of resources. Since the distinguishing advantage is a pseudo-metric, it respects the triangle inequality. For any resources $\mathcal{R}, \mathcal{S},$ and $\mathcal{T},$ we have
\begin{equation}
 d(\mathcal{R},\mathcal{S}) \leq d(\mathcal{R},\mathcal{T}) + d(\mathcal{T},\mathcal{S}). \label{eq:triangle-ineq}
 \end{equation}
Distinguishing advantage is non-increasing under composition with any other system. Therefore for any converter $\alpha$ and resources $\mathcal{R}, \mathcal{S}$ and $\mathcal{T},$
\begin{equation}
d(\alpha \mathcal{R},\alpha \mathcal{S}) \leq d(\mathcal{R},\mathcal{S}), \label{eq:non-inc-conv}
\end{equation}
\begin{equation}
d(\mathcal{R}\mathcal{T},\mathcal{S}\mathcal{T}) \leq d(\mathcal{R},\mathcal{S}). \label{eq:non-inc-res}
\end{equation}

Two resources $\mathcal{R}$ and $\mathcal{S}$ are considered $\epsilon\text{-}$close which is denoted by $\mathcal{R} \underset{\epsilon}{\approx}\mathcal{S}$ if no distinguisher (in a certain class of distinguishers) has a distinguishing advantage of more than $\epsilon.$

AC views cryptography as a resource theory: a protocol constructs
a (strong) resource given some (weak) resources. Due to this constructive
aspect of the framework, it is also called \textit{constructive cryptography}
in the literature~\cite{maurer2011constructive,maurer2016indifferentiability}. If a protocol $\pi$ constructs some resource $\epsilon$-close to $\mathcal{S}$ from another resource $\mathcal{R},$ we write:
\begin{equation}
\mathcal{R} \xrightarrow[]{\pi,\epsilon} \mathcal{S}.
\label{Eq:protocol}
\end{equation}

For a cryptographic construction to be usable in any possibly complex manner, the following composable conditions must be held:

\[\mathcal{R} \xrightarrow[]{\pi,\epsilon} \mathcal{S} \;\;\text{and}\;\; \mathcal{S}\xrightarrow[]{\pi',\epsilon'} \mathcal{T} \Rightarrow \mathcal{R}\xrightarrow[]{\pi'\pi,\epsilon+\epsilon'}\mathcal{T} ,\]

\[\mathcal{R} \xrightarrow[]{\pi,\epsilon} \mathcal{S} \;\;\text{and}\;\; \mathcal{S}\xrightarrow[]{\pi',\epsilon'} \mathcal{S}' \Rightarrow \mathcal{R}||\mathcal{R'}\xrightarrow[]{\pi | \pi',\epsilon+\epsilon' }\mathcal{S}||\mathcal{S}', \]
where $\mathcal{R}||\mathcal{R'}$ is the resource resulting from parallel composition of $\mathcal{R}$ and $\mathcal{R'}$, and $\pi' \pi$ and $\pi|\pi'$ are sequential and parallel composition of protocols, respectively.

If two protocols with composable errors $\epsilon$ and $\delta$ are run jointly (e.g., one is a subroutine of the other), the error of the composed protocol is bounded by the sum of the individual errors, $\epsilon+\delta.$
In Eq. (\ref{Eq:protocol}), the resource $\mathcal{R}$ along with the protocol $\pi$ are parts of the concrete protocol and $
\mathcal{S}$ is the ideal resource we want to construct.
 Eq.~(\ref{Eq:protocol}) holds if the adversary, with access to the ideal resource, can get anything that she can get in the real world.
In other words, both systems have equivalent behaviour in the sense that they behave identically if plugged into any possible environment or application.
From the adversary's point of view, anything that could happen in the real world could identically happen in the ideal world.
Simulator systems transform the ideal resource into the real world system consisting of the real resource and the protocol.
Now we define the security of a protocol in the three-party setting with honest Alice and Bob and dishonest Eve. See~\cite{maurer2011abstract} for the general case.

\begin{definition} (Composable security~\cite{maurer2011abstract})
Let $\mathcal{R}_\phi=(\mathcal{R},\phi)$ and $\mathcal{S}_\psi=(\mathcal{R},\psi)$ be pairs of a resource ($\mathcal{R}$ and $\mathcal{S}$) with interfaces $\mathcal{I}=\{A,B,E\}$ and filters $\phi$ and $\psi$. We say that a protocol $\pi=(\pi_A,\pi_B)$ securely constructs
   $\mathcal{S}_\psi$ out of $\mathcal{R}_\phi$ within $\epsilon$ and write
\begin{equation}
\mathcal{R}_\phi \xrightarrow[]{\pi,\epsilon} \mathcal{S}_\psi,
\label{Eq:composability}
\end{equation}
 if the two following conditions hold:
\begin{itemize}
\item $d(\pi\mathcal{R}_\phi,\mathcal{S}_\psi)\leqslant\epsilon,$ and
\item there exists a converter $\sigma_E$, called a simulator, such that $d(\pi\mathcal{R},\mathcal{S}\sigma_E)\leqslant\epsilon.$
\end{itemize}
  \end{definition}
\subsection{Delegated Quantum Computation}
\label{sec:delegated}
In a two-party DQC, a client evaluates some (quantum) computation using the help of
a powerful server. The protocol must guarantee blindness, capturing the idea that the server learns nothing about the input, the output, and the computation the client evaluated other than what is unavoidable (computation size, etc.). The other security concern is verifiability, which indicates that the client can verify whether the server has followed the instructions or has deviated from them.

The composable security of DQC using the aforementioned AC framework was presented in~\cite{dunjko2014composable}.
Having introduced the notions required in the security definition, the description of the blind verifiable ideal resource model, denoted by $\mathcal{S}^{\textnormal{bv}}$, is presented as follows:
  \begin{definition}
The ideal DQC resource $\mathcal{S}^{\textnormal{bv}}$ which provides correctness, blindness and verifiability, has two interfaces, $C$ and $S$, standing for the client and the server respectively. It takes $\psi_c$, representing the description of the client's computation $U$ and quantum input $\rho_c$ at the $C$ interface and a filtered controlled bit $f^{\textnormal{bv}}$ at the $S$ interface. If the server sets $f^{\textnormal{bv}}=0,$ $\mathcal{S}^{\textnormal{bv}}$ produces the honest result $U\rho_c$ at the $C$ interface. If $f^{\textnormal{bv}}=1,$ the resource outputs the allowed leakage $\ell^{\textnormal{bv}}$ at the $S$ interface and some fixed error state $\ket{\emph{err}}\bra{\emph{err}}$ orthogonal to the output space at the $C$ interface.
 \end{definition}

Now we model the structure of a generic two-party DQC protocol, $\pi^{\textnormal{bv}}=(\pi^{\textnormal{bv}}_c,\pi^{\textnormal{bv}}_s).$ The only resource that parties require is a two-way communication channel $\mathcal{R}.$
Client's protocol $\pi_c$ receives $\psi_c$ (including the description of $\rho_c$ and $U$) as input on its outside interface. It then communicates with the server's protocol $\pi_s$ via $\mathcal{R}$ and produces some final output $\tau_c.$
Any such process where client's system is denoted $C$ and the server's system
$S$, is modelled as a sequence of CPTP maps
$\{ \mathcal{E} _i :\mathcal{L}(\mathcal{H}_{C\mathcal{R}} ) \to \mathcal{L}(\mathcal{H}_{C\mathcal{R}} )\} _{i = 1}^N$ and $\{ \mathcal{F} _i :\mathcal{L}(\mathcal{H}_{\mathcal{R}S}) \to \mathcal{L}(\mathcal{H}_{\mathcal{R}S} )\} _{i = 1}^{N-1}$
which client and the server apply sequentially to their respective systems and the communication channel $\mathcal{R}$.
The protocol $\pi^{\textnormal{bv}}=(\pi^{\textnormal{bv}}_c,\pi^{\textnormal{bv}}_s)$ is a map which, using the channel $\mathcal{R}$, transforms $\psi_c$ into $\tau_c$. If both players play honestly and the protocol is correct this should result in
$\tau_c=U\rho_c.$

\begin{definition}
A DQC protocol $\pi^{\textnormal{bv}}=(\pi^{\textnormal{bv}}_c,\pi^{\textnormal{bv}}_s)$ constructs a blind verifiable resource $ \mathcal{S}^{\textnormal{bv}}$ from the communication channel $\mathcal{R}$ within $\epsilon^{\textnormal{bv}}$ if the two following conditions are satisfied:
  \begin{equation}
 \pi_c^{\textnormal{bv}}\mathcal{R}\pi_s^{\textnormal{bv}}\underset{\epsilon^{\textnormal{bv}}}{\approx}  \mathcal{S}^{\textnormal{bv}}\perp_s,
  \label{BV-DQC1}
 \end{equation}
  \begin{equation}
 \pi_c^{\textnormal{bv}}\mathcal{R}\underset{\epsilon^{\textnormal{bv}}}{\approx} \mathcal{S}^{\textnormal{bv}}\sigma^{\textnormal{bv}}_s,
 \label{BV-DQC}
 \end{equation}
where $\perp_s$ is a protocol which just sets the filtered bit $f^{\textnormal{bv}}$ to zero, and $\sigma^{\textnormal{bv}}_s$ is some simulator.
\end{definition}
 If the first equation is satisfied, the protocol provides $\epsilon^{\textnormal{bv}}$-\emph{composable-correctness} with respect to the ideal resource,
$\mathcal{S}^{\textnormal{bv}}$. Correctness captures the notion that the input-output behaviour of the proposed protocol actually matches the input-output behaviour of the ideal resource when the server is behaving honestly.
The second condition guarantees both blindness and verifiability and if it is satisfied we say that we have $\epsilon^{\textnormal{bv}}$-\emph{blind-verifiability}. The
simulator is a protocol which, from an outside point of view, makes the ideal protocol
look like the real one, accessing only to the interface of the server.
A distinguisher which interacts with $\pi_c^{\textnormal{bv}}\mathcal{R}$ and $\mathcal{S}^{\textnormal{bv}}\sigma^{\textnormal{bv}}_s$ cannot distinguish them within $\epsilon^{\textnormal{bv}}.$ The server's part of the protocol $\pi_s^{\textnormal{bv}}$ does not appear on the left-hand side of Eq.~(\ref{BV-DQC}), which indicates that we no longer specify what the server may do. This signifies that blindness and verifiability hold for all possible cheating actions of the server. In the rest of the paper, whenever we refer to BV-DQC$(\rho, U)$, we mean a generic DQC which satisfies the conditions in Eqs.~(\ref{BV-DQC1}) and (\ref{BV-DQC}) where $\rho$ and $U$ are the client's inputs.

\subsection{Quantum Authentication Codes}
\label{sec:QAC}
A quantum authentication scheme (QAS) consists of procedures for encoding and decoding quantum information
 which allows for the detection of tampering on the encoded quantum data with a high probability. Quantum authentication codes were first
introduced by Barnum \emph{et al}~\cite{barnum2002authentication}. They also proved that quantum authentication necessarily encrypts the message as well. It therefore follows that any authentication protocol for an $m$-qubit message requires a $2m$-bit secret key $k$ to encrypt the message.
The security of quantum authentication codes is analysed in~\cite{portmann2016quantum} in the AC framework. It is proved that
the authentication protocols of~\cite{barnum2002authentication} can construct a secure quantum channel.

\begin{definition}
The secure quantum channel $\mathcal{C}^{\textnormal{q-sec}}$ (which provides both authenticity and confidentiality), shown in the top of Figure~\ref{fig:auth-ideal} (a), allows a quantum message $\rho$ to be transmitted from Alice to Bob. The resource only allows Eve to input a filtered control bit $f^\textnormal{q-sec}$ that decides if Bob receives the message ($f^\textnormal{q-sec}=0$) or $\ket{\emph{err}}\bra{\emph{err}}$ ($f^\textnormal{q-sec}=1$). It only leaks $\ell^\textnormal{q-sec}$ to the
adversary, denoted by the dashed
arrow in Figure~\ref{fig:auth-ideal} (a), which is a notification that the message has been sent as well as its length.
\end{definition}

The required resources in the concrete setting of QAS are a \emph{quantum insecure channel} $\mathcal{C}^{\textnormal{q-insec}}$ and a \emph{secret key resource} $\mathcal{K}$. $\mathcal{C}^{\textnormal{q-insec}}$ is completely under the control of the adversary. It allows the adversary to intercept the message sent from the sender to the receiver and replace it with any message of their choosing at their interface. $\mathcal{K}$ has three interfaces $A$, $B$ and $E$.
It outputs a perfectly uniform secret key $k$ at the interfaces $A$ and $B$ and
 it has no functionality at the adversary's interface $E$.
The QAS protocol which includes the pair of converters $(\pi^{\textnormal{e}}_{A},\pi^{\textnormal{d}}_{B})$ constructs $\mathcal{C}^{\textnormal{q-sec}}$ given $\mathcal{C}^{\textnormal{q-insec}}$ and $\mathcal{K}$ within error $\epsilon^{\textnormal{q-sec}},$ if the following equation is satisfied:
 \begin{equation}
 \pi^{\textnormal{e}}_{A}\pi^{\textnormal{d}}_{B}(\mathcal{K}||\mathcal{C}^{\textnormal{q-insec}}) \underset{\epsilon^{\textnormal{q-sec}}}{\approx}\mathcal{C}^{\textnormal{q-sec}}\sigma^ {\textnormal{q-sec}}_E, \label{sim-auth}
 \end{equation}
where $\pi^{\textnormal{e}}_{A}$ is a converter which receives input state $\rho$ at its outside interface. It requests a key $k$ from $\mathcal{K}$, applies a map $E_k$ on $\rho$ which outputs an authenticated state $\tau$ and sends it through $\mathcal{C}^{\textnormal{q-insec}}.$
$\pi^{\textnormal{d}}_B$ takes in a (possibly) tampered state $\tau'$ and applies a verification map $D_k$ and outputs the state $\rho'.$  $\rho'$ is equal to $\rho$ if the verification succeeds and is equal to $\ket{\text{err}}\bra{\text{err}}$ otherwise (Figure~\ref{fig:auth-ideal} (b)).
The converter $\sigma^{\textnormal{q-sec}}_E$ (shown in Figure~\ref{fig:auth-ideal} (a)) is a simulator that given access only to the $E$ interface of $\mathcal{C}^{\textnormal{q-sec}}$ is able to emulate the
behaviour of the adversary so that the real-world protocol (involving the adversary) is statistically
indistinguishable from the ideal-world protocol (involving the simulator).

\begin{figure}[!h]
\centering  \includegraphics[width=\columnwidth]{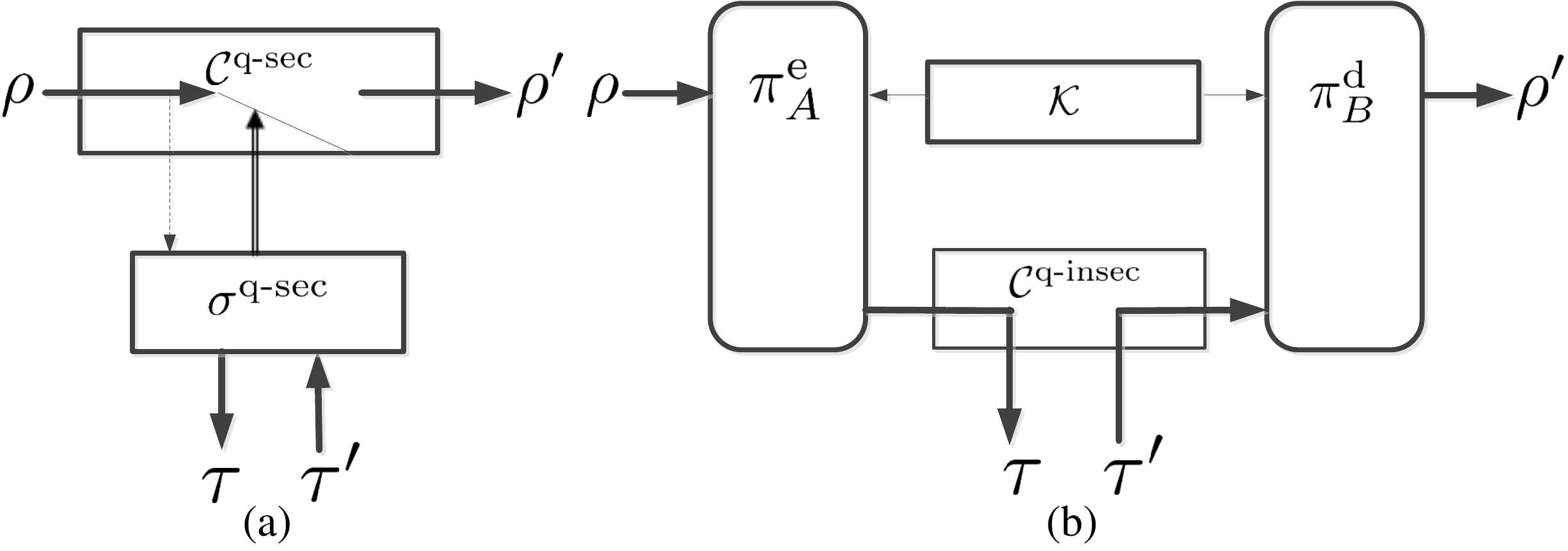}
\caption{ a). A schematic of the ideal resource (top) and simulator (bottom) for a secure quantum channel. Alice, Bob and Eve have access to the left, right and bottom interfaces of ideal resource $\mathcal{C}^{\textnormal{q-sec}},$ respectively. A simulator $\sigma^{\textnormal{q-sec}}_E$ can be appended to Eve's interface so that the interfaces of the compound object matches those of concrete protocol in (b). b) The QAS protocol $(\pi^{\textnormal{e}}_{A},\pi^{\textnormal{d}}_{B})$ constructs $\mathcal{C}^{\textnormal{q-sec}}$ given $\mathcal{C}^{\textnormal{q-insec}}$ and $\mathcal{K}.$}
\label{fig:auth-ideal}
\end{figure}

\section{Multi-Client Delegated Quantum Computation}
\label{sec:MCDQC}
In this section, we present definitions for multi-client DQC.
 In this scheme, $n$ clients, $\text{Client}_1, \text{Client}_2,...,$ and $\text{Client}_n$ delegate a global computation to an untrusted powerful server in a blind verifiable manner. $\text{Client}_i$ is responsible for providing their own part of the input state $\rho_i$.
The global computation is comprised of a network of local unitaries, individually defined by each client.
The clients' computations may interlock in the sense that their local unitaries may affect qubits that are (part of) the outputs of other clients' local unitaries. We call these qubits, \textit{common} qubits. The output of the global computation is split into $n$ components such that every $\text{Client}_i$ receives one part $\varrho_i.$ In addition to security concerns on the server side, the privacy among the clients is required. Each client learns no information about the input, local unitaries and output states of the other clients apart from what can be learned by their input, their local unitary outputs, and the final output state.

The multiple clients may interact in many configurations which are proven to be universal, such as nearest neighbour topology, where each client interacts only with their nearest neighbours. However, a desired topology for a multi-client DQC should be i) a regular arrangement which can hide the structure of the protocol such that all that is revealed is the total size of unitaries defined by clients in each round and ii) no dishonest client can become a communication bottleneck, such that the client can block another client from communicating with the others. We consider a fully connected topology~\cite{tanenbaum2003computer} which fulfils these requirements. A fully connected network is a communication network in which each client is connected to all other ones and is represented by a complete graph in graph theory. A fully connected network is symmetric, in the sense that there is no distinguished node, i.e., every party's view of the rest of the network is the same.

Let us define the set of $m$ local unitaries for $\text{Client}_i$ as $\mathcal{U}_{i}=\{U_{i}^{(h)}\}_{h\in\{1,\dots,m\}},$ where $U_{i}^{(h)}$ is the $h^{\text{th}}$ local unitary associated with $C_i$. If the number of local unitaries of the clients is not the same, then identities are appended to make them equal.
Figure~\ref{fig:MultiParty} shows an example of such a setting, with $n=3$ and $m=2.$ The set of labels of qubits of $U_{i}^{(h)}$ that are fed into $U_{j}^{(h+1)}$ is denoted by $T_{i \rightarrow j}^{(h)}$.
Now we are ready to define the ideal multi-client DQC resource. In the definition, $\psi_{i}, i\in\{1,\dots,n\}$ describes the quantum input $\rho_{i}$, the set of local unitaries $\mathcal{U}_{i}$ and the sets $T_{i \rightarrow j}^{(h)},$ where $j\in\{1,\dots,n\}$, and $h\in\{1,\dots,m-1\}.$

 \begin{figure}[!h]
 \centering
  \includegraphics[width=3 in]{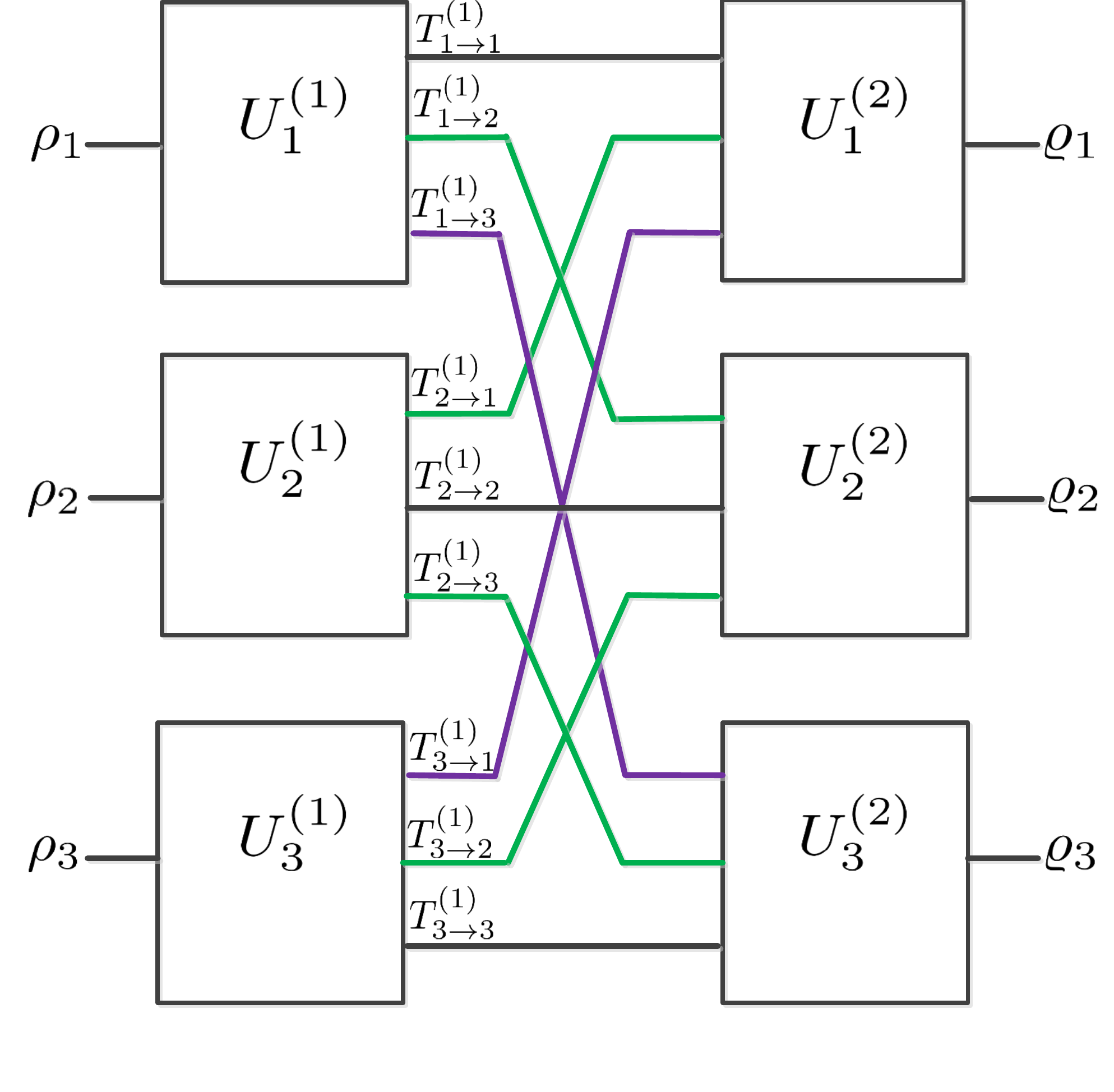}
   \caption{An example of multi-party computation framework with $n=3$ and $m=2$}
  \label{fig:MultiParty}
\end{figure}

 \begin{definition}
 The ideal DQC resource with $n$ clients, $Client_1, Client_2,\dots, Client_n$, which provides correctness, blindness, verifiability and privacy among clients is denoted by $\mathcal{S}^{n\textnormal{-bv}}$. $\mathcal{S}^{n\textnormal{-bv}}$ has $n+1$ interfaces, $C_1,\dots,C_n$ and $S$ standing for $Client_1,\dots,Client_n$ and the server respectively.
It takes $\psi_{i}$ at $C_i$'s interface.
The server's interface has a filter controlled bit $f^{n\textnormal{-bv}}_s$. If the server sets $f^{n\textnormal{-bv}}_{s}=0,$ $\mathcal{S}^{n\textnormal{-bv}}$ produces the correct output result of each client at its interface. If $f^{n\textnormal{-bv}}_s=1,$ $\mathcal{S}^{n\textnormal{-bv}}$ outputs the allowed leakage $\ell^{n\textnormal{-bv}}$ at the server's interface and some fixed error state $\ket{\emph{err}}\bra{\emph{err}}$ that is orthogonal to the output space, at every client's interface ($C_1,\dots,$ and $C_n$) for which an output is expected.
 \end{definition}

 After defining the ideal resource that a generic multi-client DQC protocol is expected to construct, we mention the resources that are required to achieve this, namely, a secret key resource $\mathcal{K},$ a two-way (quantum and classical) channel $\mathcal{R}_{i}$ between the server and each
 $\text{Client}_i$, and an authenticated classical channel $\mathcal{C}^{\textnormal{c-auth}}$ between any pair of clients. $\mathcal{C}^{\textnormal{c-auth}}$ is a channel which allows a classical message to be transmitted from the sender to the reciever. We denote all of the classical authentication channels among clients as $\mathcal{C}^{\textnormal{c-auth}}_{\textnormal{a}}.$

A protocol $\pi^{n\textnormal{-bv}}=(\pi_{c_1}^{n\textnormal{-bv}},\dots,\pi_{c_n}^{n\textnormal{-bv}},\pi_s^{n\textnormal{-bv}})$ constructs a blind verifiable resource $ \mathcal{S}^{n\textnormal{-bv}}$ from $(\mathcal{K}||\mathcal{C}^{\textnormal{c-auth}}_{\textnormal{a}}||\mathcal{R}_{1}|| \dots||\mathcal{R}_{n})$ within $\epsilon^{n\textnormal{-bv}},$ if the following conditions are satisfied:
\begin{itemize}
\item $\pi_{c_1}^{n\textnormal{-bv}}\cdots \pi_{c_n}^{n\textnormal{-bv}}\pi_s^{n\textnormal{-bv}}(\mathcal{K}||\mathcal{C}^{\textnormal{c-auth}}_{\textnormal{a}}||\mathcal{R}_{1}|| \dots||\mathcal{R}_{n})\underset{\epsilon^{n\textnormal{-bv}}}{\approx} \mathcal{S}^{n\textnormal{-bv}}\perp_{s},$
\item $
\pi_{c_1}^{n\textnormal{-bv}} \dots\pi_{c_n}^{n\textnormal{-bv}}(\mathcal{K}||\mathcal{C}^{\textnormal{c-auth}}_\textnormal{a}||\mathcal{R}_{1}|| \dots||\mathcal{R}_{n})\underset{\epsilon^{n\textnormal{-bv}}}{\approx}  \mathcal{S}^{n\textnormal{-bv}}\sigma^{n\textnormal{-bv}}_s.$
\end{itemize}
This ensures security against a dishonest server, and is what we shall focus on proving. A stronger security definition which also captures cheating clients is obtained by replacing the second criterion above with 
\begin{itemize}
	\item $\pi_{c_{h_1}}^{n\textnormal{-bv}} \cdots \pi_{c_{h_H}}^{n\textnormal{-bv}}(\mathcal{K}||\mathcal{C}^{\textnormal{c-auth}}_\textnormal{a}||\mathcal{R}_{1}|| \dots||\mathcal{R}_{n})\underset{\epsilon^{n\textnormal{-bv}}}{\approx} \mathcal{S}^{n\textnormal{-bv}}\sigma^{n\textnormal{-bv}}_{D}$,
\end{itemize}
 where the simulator $\sigma^{n\textnormal{-bv}}_{D}$ acts on the interfaces of all parties except $C_{h_1}$, $\dots$, $C_{h_{H-1}}$ and $C_{h_H}$. A full exploration of protocols for this later model of security is beyond the scope of the current work.

 We build the concrete protocol out of subprotocols that already exist. More specifically, the subprotocols consist of any BV-DQC and QAS, without making reference to any particular protocol. Then we use the existing results on the composable security of BV-DQC and QAS to argue that the composition we describe is secure. In terms of these subprotocols, the outline of the concrete setting is as follows.
 Each client delegates the computation of each local unitary to the server using BV-DQC. If any client detects a verification error, they broadcast $e=1$ via $\mathcal{C}^{\textnormal{c-auth}}_{\textnormal{a}}$ to all other clients, they all output $\ket{\text{err}}\bra{\text{err}}$ and abort the protocol. Otherwise, the client sends the common qubits to the server, and the server sends each qubit to the corresponding clients. The mentioned channel between the server and the client is insecure because the server might tamper the qubits and therefore, the client must protect the qubits with quantum authentication codes.

  At the beginning of the protocol, each client agrees on a key for each set of common qubits with the desired recipient. For the first layer of unitaries, true computation followed by an authentication map is delegated to the server.
  In the intermediate layers, the qubits are first decoded, the true computation is applied, and finally the qubits are encoded again. As for the last layer of local unitary, the qubits are first decoded and then the true computation is applied. All of the aforementioned encoding and decoding maps are carried out by the agreed corresponding keys and as a part of computation. Similar to the case of verification error, if any client detects an authentication error between rounds, they broadcast $e=1$ via $\mathcal{C}^{\textnormal{c-auth}}_{\textnormal{a}}$ to other clients, they output $\ket{\text{err}}\bra{\text{err}}$ and abort the protocol.

 A generic protocol for secure multi-client DQC is described in Protocol 1. We note that in order to be secure against dishonest clients, the protocol would need to be ammended so that any client receiving e=1 should immediately rebroadcast this to all other clients before aborting the protocol to avoid a situation where a malicious client could cause only a subset of honest clients to abort.

\begin{algorithm} [htbp]
\caption{Blind Verifiable Multi-Client DQC}
\begin{algorithmic}[1]
\STATE Let $n$ and $m$ be the number of clients and the total number of local unitaries of each client respectively.
\STATE Let $ i,j\in\{1,...,n\}, h\in\{1,...,m\}, \text{and} \;h'\in\{1,...,m-1\}.$
\STATE Let $\rho_{i}$, $\mathcal{U}_{i}=\{U_{i}^{(h)}\},$ and $\varrho_{i}^{(m)}$ be the input, the set of local unitaries and 
the output of $\text{Client}_{i}$, respectively.
\STATE Let $T_{i \rightarrow j}^{(h')}$ be the set of labels of output qubits of $U_{i}^{(h')}$ that are fed into $U_{j}^{(h'+1)}.$
\STATE Let $(E_{k},D_{k})$ be a pair of keyed operations to be used to authenticate quantum states.

\STATE $\text{Client}_{i}$ and $\text{Client}_{j}$ ($i \ne j$) agree on a secret key $k_{i\rightarrow j}^{(h')}$ for authenticating the qubits with labels in $T_{i\rightarrow j}^{(h')}.$

\STATE Let \[
 V_{i}^{(h)}  = \left\{ \begin{array}{l}
\mathop {\mathop  \otimes \limits_{j}} E_{k_{i \rightarrow j}^{(h)}}U_{i}^{(h)} ,\,\,if\,h=1, \\
 \mathop {\mathop  \otimes \limits_{j} } E_{k_{i \rightarrow j}^{(h)}}U_{i}^{(h)}\mathop {\mathop  \otimes \limits_{j} } D_{k_{j \rightarrow i}^{(h-1)}},\,\,if\,1<h<m,\\
U_{i}^{(h)}\mathop {\mathop  \otimes \limits_{j} } D_{k_{j \rightarrow i}^{(h-1)}},\,if\,h=m, \\
 \end{array} \right.
 \]
where $E_{k_{i \rightarrow j}^{(h)}}$ and $D_{k_{i \rightarrow j}^{(h)}}$ are applied to the qubits with labels in $T_{i\rightarrow j}^{(h)}.$
\FOR {$h':= 1$ to $m-1$}
\FOR {$i:= 1$ to $n$}
\STATE $\text{Client}_{i}$ and the server run BV-DQC($\rho_{i}^{(h')},V_{i}^{(h')}$) which outputs $\varrho_{i}^{(h')}.$
\IF {$\varrho_{i}^{(h')}$ is $\ket{\text{err}}\bra{\text{err}}$}
\STATE $\text{Client}_{i}$ sends $e=1$ to the other clients through an authenticated classical channel and all clients output $\ket{\text{err}}\bra{\text{err}}$ and abort the protocol
\ELSE
\STATE $\text{Client}_{i}$ sends $\varrho_{i}^{(h')}$ excluding the qubits with labels in $T_{i \rightarrow i}^{(h')}$ to the server
\ENDIF
\FOR {$j:=1$ to $n$, $j\ne i$}
\STATE  Let $\varrho_{i \rightarrow j}^{(h')}$ be the state of the qubits of $\varrho_{i}^{(h')}$ in $T_{i \rightarrow j}^{(h')}$
\STATE {The server sends $\varrho_{i \rightarrow j}^{(h')}$ to $\text{Client}_{j}$}
\ENDFOR

\ENDFOR
\FOR {$i:= 1$ to $n$}
\STATE Let $\rho_{i}^{(h'+1)}=\mathop {\mathop  \otimes \limits_{j} } \varrho_{j \rightarrow i}^{(h')}.$
\ENDFOR
\ENDFOR
\FOR {$i:= 1$ to $n$}
\STATE $\text{Client}_{i}$ and the server run BV-DQC($\rho_{i}^{(m)},V_{i}^{(m)}$) which outputs $\varrho_{i}^{(m)}$
\IF {$\varrho_{i}^{(m)}$ is $\ket{\text{err}}\bra{\text{err}}$}
\STATE $\text{Client}_{i}$ sends $e=1$ to the other clients through an authenticated classical channel and all clients output $\ket{\text{err}}\bra{\text{err}}$ and abort the protocol
\ELSE
\STATE $\text{Client}_{i}$ outputs $\varrho_{i}^{(m)}$
\ENDIF
\ENDFOR
\end{algorithmic}
\end{algorithm}
\subsection{Security Proof}
\begin{theorem}
Protocol 1 constructs $\mathcal{S}^{n\textnormal{-bv}}$ using resources $\mathcal{K},$ $\mathcal{R}_1,$ $\mathcal{R}_2$ and $\mathcal{C}^{\textnormal{c-auth}}_{\textnormal{a}},$
 with error $\epsilon^{\textnormal{n-bv}}=(mn)\epsilon^{\textnormal{bv}}+n(n-1)(m-1)\epsilon^{\textnormal{q-sec}},$ where $n$ and $m$ are the numbers of clients and each client's unitaries respectively.
\end{theorem}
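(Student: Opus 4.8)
The plan is to exploit the modular construction of Protocol~1 together with the composability of the AC framework, so that the global error bound follows simply by summing the errors of the constituent subprotocols. First I would make the decomposition explicit: Protocol~1 invokes exactly $mn$ independent instances of BV-DQC (one per client per layer, with $\text{Client}_i$ running $\pi_c^{\textnormal{bv}}$ on input $(\rho_i^{(h)},V_i^{(h)})$ and the server running $\pi_s^{\textnormal{bv}}$ over $\mathcal{R}_i$), and exactly $n(n-1)(m-1)$ instances of the QAS secure-channel construction (one for each ordered pair of distinct clients $(i,j)$ across each of the $m-1$ inter-layer transitions, realised over $\mathcal{R}_i$, $\mathcal{R}_j$ and a key drawn from $\mathcal{K}$). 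The remaining client-side operations --- applying the true local unitary $U_i^{(h)}$, attaching or stripping the authentication maps $E,D$ packaged inside $V_i^{(h)}$, routing the common qubits labelled by $T_{i\to j}^{(h)}$, and broadcasting or acting on the abort flag $e$ over $\mathcal{C}^{\textnormal{c-auth}}_{\textnormal{a}}$ --- are purely local converters that I treat as the \emph{glue} that connects these resources into $\pi_{c_1}^{n\textnormal{-bv}}\cdots\pi_{c_n}^{n\textnormal{-bv}}$.

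For the correctness condition (the first $\approx$), I would insert the honest-server converter $\perp_s$ and replace each real subprotocol by its ideal counterpart: each BV-DQC by $\mathcal{S}^{\textnormal{bv}}\perp_s$ using Eq.~(\ref{BV-DQC1}), and each secure channel by the filtered $\mathcal{C}^{\textnormal{q-sec}}$ using Eq.~(\ref{sim-auth}) with the filter on the adversarial (server) interface. Since distinguishing advantage is non-increasing under composition (Eqs.~(\ref{eq:non-inc-conv})--(\ref{eq:non-inc-res})) and obeys the triangle inequality (Eq.~(\ref{eq:triangle-ineq})), each substitution costs at most $\epsilon^{\textnormal{bv}}$ or $\epsilon^{\textnormal{q-sec}}$, and summing over all $mn$ plus $n(n-1)(m-1)$ substitutions yields the claimed bound. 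I then check that the fully idealised resource, once glued by the clients' converters, computes the correct split output: the matched pairs $D_{k}E_{k}=\unit$ in consecutive $V_i^{(h)}$ cancel (equivalently, each $\mathcal{C}^{\textnormal{q-sec}}$ merely transports a common qubit), and the $T_{i\to j}^{(h)}$ routing wires the local unitaries into the intended global computation, so the composite equals $\mathcal{S}^{n\textnormal{-bv}}\perp_s$.

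For security against a dishonest server (the second $\approx$) I would run a hybrid argument. Starting from $\pi_{c_1}^{n\textnormal{-bv}}\cdots\pi_{c_n}^{n\textnormal{-bv}}(\mathcal{K}||\mathcal{C}^{\textnormal{c-auth}}_{\textnormal{a}}||\mathcal{R}_1||\cdots||\mathcal{R}_n)$, I replace the subprotocols one at a time: each BV-DQC by $\mathcal{S}^{\textnormal{bv}}\sigma_s^{\textnormal{bv}}$ via Eq.~(\ref{BV-DQC}) and each secure channel by $\mathcal{C}^{\textnormal{q-sec}}\sigma_E^{\textnormal{q-sec}}$ via Eq.~(\ref{sim-auth}). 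Crucially the server's honest converter $\pi_s^{\textnormal{bv}}$ never appears on the left-hand side here, so these substitutions are legitimate for an arbitrary cheating server. Consecutive hybrids differ by at most one subprotocol error, and the triangle inequality again accumulates the total to $(mn)\epsilon^{\textnormal{bv}}+n(n-1)(m-1)\epsilon^{\textnormal{q-sec}}$. After the last replacement I collect every server-facing simulator ($\sigma_s^{\textnormal{bv}}$ and $\sigma_E^{\textnormal{q-sec}}$) into a single converter $\sigma_s^{n\textnormal{-bv}}$ at the $S$ interface and argue that the fully idealised object equals $\mathcal{S}^{n\textnormal{-bv}}\sigma_s^{n\textnormal{-bv}}$.

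The main obstacle I expect is this final identification, which requires verifying three points carefully. \emph{(i)~Privacy among clients}: in the idealised picture each common qubit travels through a dedicated $\mathcal{C}^{\textnormal{q-sec}}$ whose only leakage is a send-notification and length, so no client's converter ever touches another's private register; I would confirm that the $T_{i\to j}^{(h)}$ routing exposes to $\text{Client}_j$ precisely the qubits it is meant to receive, matching the output splitting in the definition of $\mathcal{S}^{n\textnormal{-bv}}$. \emph{(ii)~Abort propagation}: the single filter bit $f_s^{n\textnormal{-bv}}$ must behave like the disjunction of all per-subprotocol filter bits, since a $\ket{\text{err}}\bra{\text{err}}$ from any BV-DQC or any authenticated channel triggers the broadcast of $e=1$ over $\mathcal{C}^{\textnormal{c-auth}}_{\textnormal{a}}$ and a global abort; I would show the simulator sets $f_s^{n\textnormal{-bv}}=1$ exactly when at least one constituent filter fires, delivering $\ket{\text{err}}\bra{\text{err}}$ at every client expecting output. \emph{(iii)~Leakage bookkeeping}: the aggregated leakage of the individual ideal resources must coincide with $\ell^{n\textnormal{-bv}}$, which is where the regular fully connected topology is used, ensuring only the total unitary sizes per round are revealed. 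Once these three points are settled, both $\approx$ relations hold within the stated $\epsilon^{n\textnormal{-bv}}$, completing the proof.
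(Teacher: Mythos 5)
Your proposal is correct and takes essentially the same route as the paper: both decompose Protocol~1 into its $mn$ BV-DQC instances and $n(n-1)(m-1)$ QAS instances, substitute each by its ideal counterpart using the composable guarantees of Eqs.~(\ref{BV-DQC1}), (\ref{BV-DQC}) and (\ref{sim-auth}), accumulate the errors via the triangle inequality and the non-increase of distinguishing advantage under composition, and finally merge the per-subprotocol simulators into a single server-side simulator $\sigma_s^{n\textnormal{-bv}}$. The only difference is presentational: the paper carries out the argument in detail for a two-client example in three explicit stages (quantum-capable clients, delegation to error-free resources, delegation to error-prone $\mathcal{S}^{\textnormal{bv}}$) and then asserts the general case extends directly, whereas you state the general hybrid argument at once.
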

\begin{proof}
We will prove the security for a simple example with two clients, $\text{Client}_1$ and $\text{Client}_2$ each with one local unitary, $U_{c_1}$ and $U_{c_2}$ as shown in Figure~\ref{fig:SampleMBQC}. The proof can be directly extended for the general case with multiple clients.
The definition of the ideal resource related to the example in Figure~\ref{fig:whole-conc} is as follows. It should be noted that for simplicity we will write $U\rho$ instead of $U\rho U^{\dagger}$ each time we talk about applying a unitary operation $U$ to a quantum state $\rho$.

\begin{figure}[!h]
\centering  \includegraphics[width=3 in]{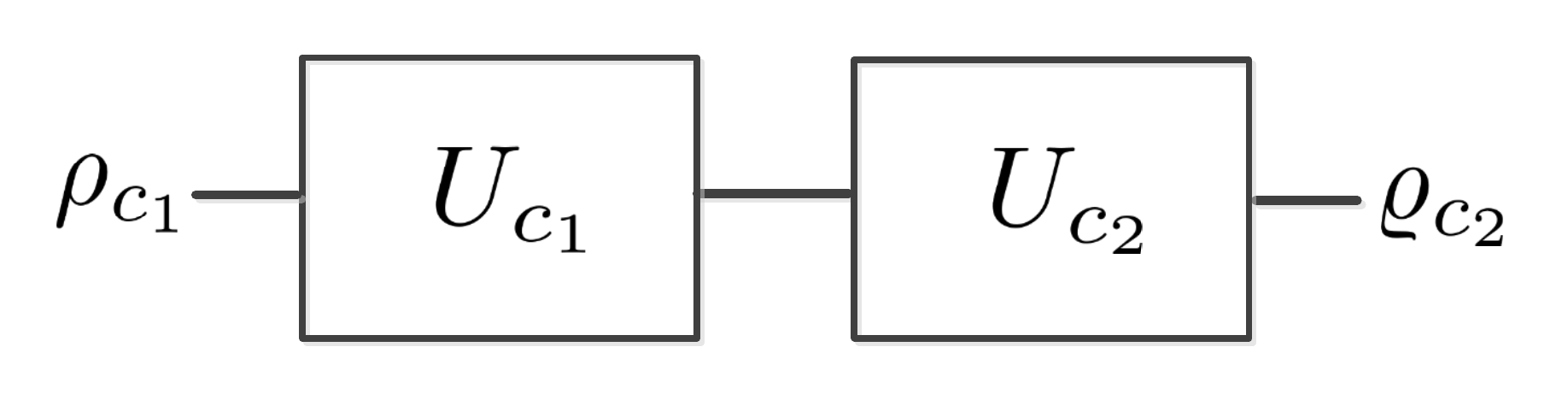}
\caption{A sample setting for multi-party computation with two clients, each with one local unitary.}
 \label{fig:SampleMBQC}
\end{figure}
\begin{definition}
The ideal 2-client DQC resource related to the example in Figure~\ref{fig:SampleMBQC}, $\mathcal{S}^{\textnormal{2-bv}},$ which provides correctness, blindness, verifiability and privacy among clients has 3 interfaces, $C_1$, $C_2$ and $S$ standing for ${Client}_1$, ${Client}_2$ and the server.
It takes inputs $\rho_{c_1}$ and $U_{c_1}$ at the $C_1$ interface, and $U_{c_2}$ at the $C_2$ interface. The server's interface has a filter controlled bit $f^{\textnormal{2-bv}}_s$.
If the server sets $f^{\textnormal{2-bv}}_s=0,$ $\mathcal{S}^{\textnormal{2-bv}}$ outputs
$\varrho_{c_2}=U_{c_2}U_{c_1}\rho_{c_1}$ at the $C_2$ interface.
If $f^{\textnormal{2-bv}}_s=1$, the server's interface outputs the allowed leakage and some fixed error state $\left| \emph{err} \right\rangle \left\langle\emph{err} \right|$ at the $C_2$ interface.
\end{definition}

The top of Figure~\ref{fig:ideal-sim} shows the ideal resource $\mathcal{S}^{2\textnormal{-bv}}$. We show that Protocol 2 constructs $\mathcal{S}^{2\textnormal{-bv}}$ using a parallel composition of $\mathcal{K},$ $\mathcal{R}_1,$ $\mathcal{R}_2$ and $\mathcal{C}^{\textnormal{c-auth}},$
 with error $\epsilon^{2\textnormal{-bv}}=\epsilon^{\textnormal{q-sec}}+2\epsilon^{\textnormal{bv}}.$  The proof of security proceeds in three main steps. In the first step, it is supposed that the parties have access to quantum capabilities, and hence they do not need the server. In the next step, we move towards the real setting by assuming that the parties do not have quantum capabilities, but delegate the computations to an ideal resource without any error. Finally in the last step, we reach to the real case where the ideal resource may produce error instead of the correct output.
       \begin{figure}[!h]
\centering  \includegraphics[width=4 in]{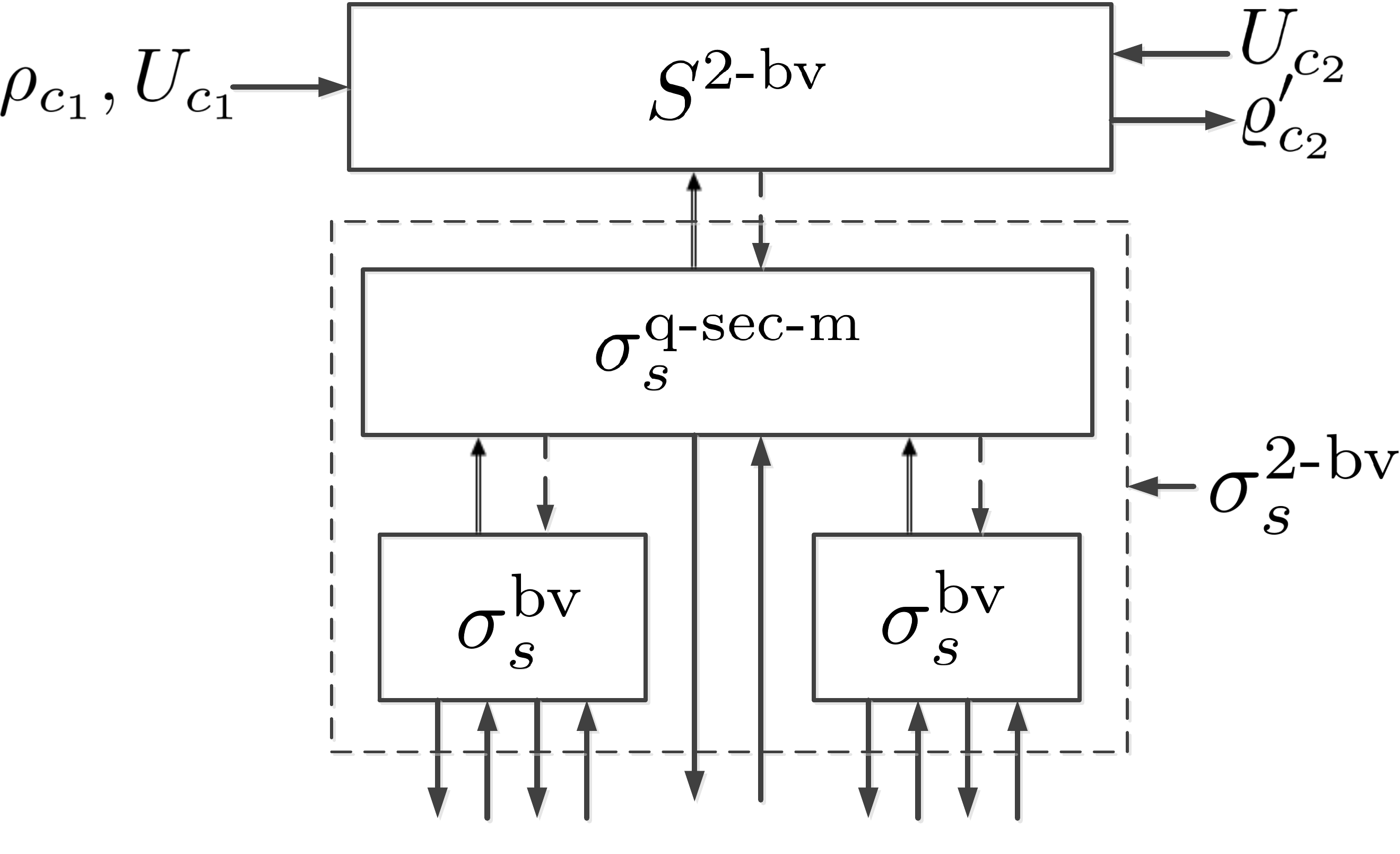}
\caption{The top of the figure and the region within the dashed box indicate ideal resource $\mathcal{S}^{2\textnormal{-bv}}$ and the simulator $\sigma^{2\textnormal{-bv}}_s$, respectively.}
 \label{fig:ideal-sim}
\end{figure}

\begin{algorithm}[!h]
\caption{Blind Verifiable Two-Client DQC for the Setting in Figure~\ref{fig:SampleMBQC}}
\begin{algorithmic}[1]
\STATE Let $\rho_{c_i}$ and $U_{c_i}$ be the input and local unitary of $C_i$, respectively, where $i\in\{1,2\}.$

 \STATE $\text{Client}_1$ and $\text{Client}_2$ agree on a secret key $k.$
\STATE Let $(E_{k},D_{k})$ be a pair of operations for authentication.

\STATE $\text{Client}_1$ and the server run BV-DQC($\rho_{c_1}, E_kU_{c_1}$) which outputs $\varrho_{1}^{(1)}.$
\IF {$\varrho_{c_1}$ is $\ket{\text{err}}\bra{\text{err}}$}
\STATE $\text{Client}_1$ sends $e=1$ to the other client through an authenticated classical channel and $\text{Client}_2$ outputs $\ket{\text{err}}\bra{\text{err}}$ and aborts the protocol
\ELSE
\STATE $\text{Client}_1$ sends $\varrho_1^{(1)}$ to the server
\ENDIF
\STATE The server sends $\varrho_1^{(1)}$ to $\text{Client}_2$
\STATE $\text{Client}_2$ and the server run BV-DQC($\varrho_1^{(1)},U_{c_2}D_k$) which outputs $\varrho'_{{c_2}}$
\STATE $\text{Client}_2$ outputs $\varrho'_{c_2}$

\end{algorithmic}
\end{algorithm}

  \begin{figure}[!h]
\centering  \includegraphics[width=\columnwidth]{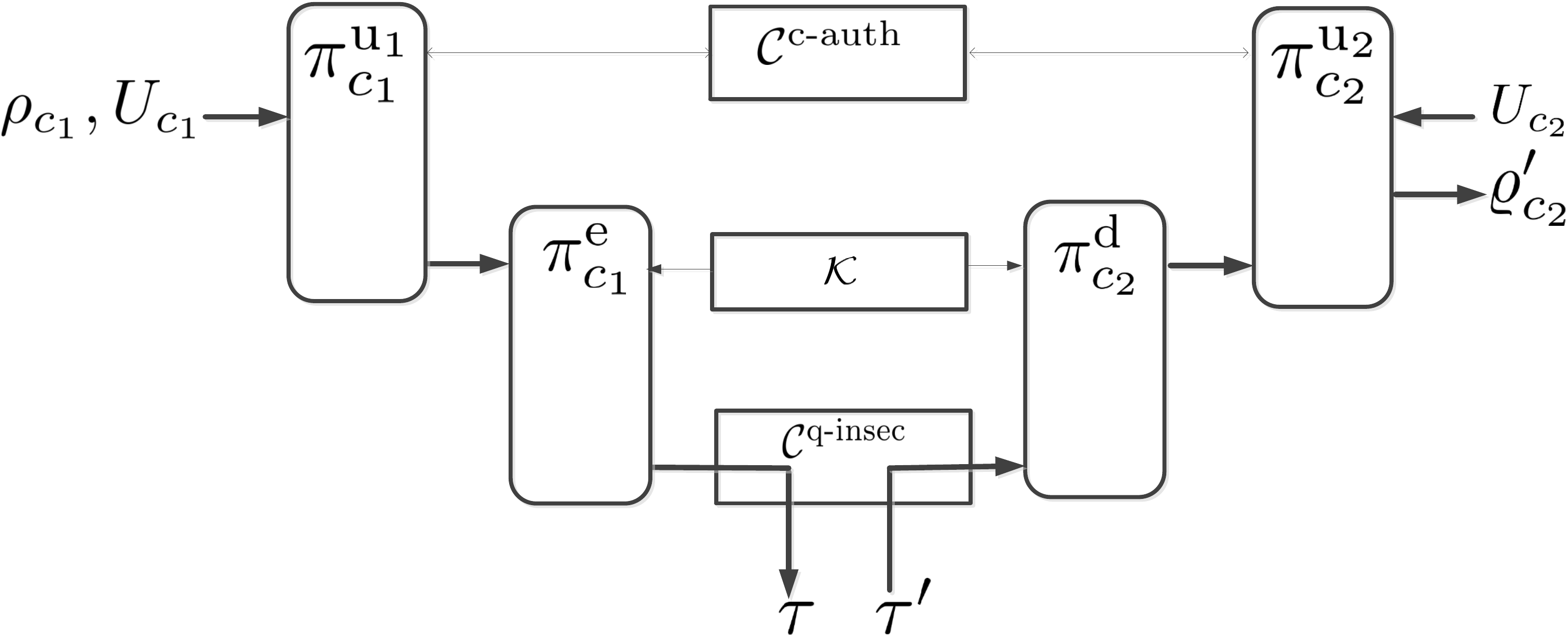}
\caption{The concrete protocol for implementing the sample example in Figure~\ref{fig:SampleMBQC}, where the parties are supposed to have access to quantum capabilities}
 \label{fig:conc-q-power}
\end{figure}

 1) First we suppose that the parties have access to quantum capabilities and describe the concrete protocol (shown in Figure~\ref{fig:conc-q-power}).
  In this figure, $\pi_{c_1}^{\textnormal{u}_1}$ is a converter which receives the input state $\rho_{c_1}$, the unitary $U_{c_1}$, and produces
 $U_{c_1}\rho_{c_1}.$ The converter $\pi^\textnormal{e}_{c_1}$ receives $U_{c_1}\rho_{c_1}$ at its inside interface. It requests a key $k$ from $\mathcal{K}$. It applies $E_k$ to $U_{c_1}\rho_{c_1}$ which outputs an authenticated state $\tau$ and sends it through $\mathcal{C}^{\textnormal{q-insec}}.$
 The converter $ \pi^{\textnormal{d}}_{c_2},$ takes a (possibly) tampered state $\tau'$ on the insecure channel and applies verification $D_k$. It outputs a state $\rho'$, which is equal to  $U_{c_1}\rho_{c_1}$ or $\ket{\text{err}}\bra{\text{err}}$
  depending on whether the verification succeeded or failed, respectively.
The converter $\pi_{c_2}^{\textnormal{u}_2}$ receives $\rho'$ at its inside interface and $U_{c_2}$ at its outside interface and outputs $\varrho'_{c_2}$. If $\rho'$ is $\ket{\text{err}}\bra{\text{err}},$ $\varrho'_{c_2}$ is equal to $\ket{\text{err}}\bra{\text{err}}$, otherwise it is $U_{c_2}U_{c_1}\rho_{c_1}.$ Using Eq. (\ref{sim-auth}), we have:
\begin{equation}
\pi_{c_1}^{\textnormal{u}_1}\pi_{c_2}^{\textnormal{u}_2} \pi^{\textnormal{e}}_{c_1}\pi^{\textnormal{d}}_{c_2}(\mathcal{K}||\mathcal{C}^{\textnormal{q-insec}})\underset{\epsilon^{\textnormal{q-sec}}}{\approx} \pi_{c_1}^{\textnormal{u}_1}\pi_{c_2}^{\textnormal{u}_2} \mathcal{C}^{\textnormal{q-sec}}\sigma^{\textnormal{q-sec}}_s.
\label{eq:converter1}
\end{equation}
Eq.~(\ref{eq:converter1}) holds because based on Eq.~(\ref{eq:non-inc-conv}), adding converters $\pi_{c_1}^{\textnormal{u}_1}$ and $\pi_{c_2}^{\textnormal{u}_2}$ can only decrease the distance between the two systems.
 Also we have:

\begin{equation}
\pi_{c_1}^{\textnormal{u}_1}\pi_{c_2}^{\textnormal{u}_2} \mathcal{C}^{\textnormal{q-sec}}\sigma^{\textnormal{q-sec}}_s= \mathcal{S}'\sigma^{\textnormal{q-sec}}_s,
 \label{eq:ideal-interghange1}
 \end{equation}
where the ideal resource, $\mathcal{S}'$ is identical to $\mathcal{C}^{\textnormal{q-sec}},$ except that $\pi_{c_1}^{\textnormal{u}_1}$ and $\pi_{c_2}^{\textnormal{u}_2}$ have been merged with it.

 2) In Figure~\ref{fig:conc-q-power}, the quantum computations performed by the parties are modelled as converters. We move forward to the real case where the parties do not have access to quantum capabilities. If they want to delegate this to a server it should be modelled as a resource. Now we suppose the clients have access to two resources, $\mathcal{S}^ \mathcal{Q}_1$ and $\mathcal{S}^ \mathcal{Q}_2$ which perform the quantum part of the computation, without any error. So we have:
\begin{equation}
\pi^{\textnormal{e}'}_{c_1}\pi^{\textnormal{d}'}_{c_2}(\mathcal{K}||\mathcal{C}^{\textnormal{q-insec}}||\mathcal{S}^ \mathcal{Q}_1|| \mathcal{S}^\mathcal{Q}_2)=\pi_{c_1}^{\textnormal{u}_1}\pi_{c_2}^{\textnormal{u}_2} \pi^{\textnormal{e}}_{c_1} \pi^{\textnormal{d}}_{c_2}(\mathcal{K}||\mathcal{C}^{\textnormal{q-insec}}),
\label{eq:SQ1}
\end{equation}
where $\pi^{\textnormal{e}'}_{c_1}$ and $\pi^{\textnormal{d}'}_{c_2}$ are now the classical (or almost classical) converters of $\text{Client}_1$ and $\text{Client}_2$ that capture their local operations:
 $\pi^{\textnormal{e}'}_{c_1}$ receives an input state $\rho_{c_1}$ and $U_{c_1}$ at its outer interface, requests a key $k$  and sends $\rho_{c_1}$ and  $E_kU_{c_1}$ to $\mathcal{S}_1^\mathcal{Q}$. It receives the response $\tau$ and sends it to $\pi^{\textnormal{d}'}_{c_2}$ via an insecure channel $\mathcal{C}^{\textnormal{q-insec}}$. $\pi^{\textnormal{d}'}_{c_2}$ receives $\tau'$ on the insecure channel and $U_{c_2}$ at its outer interface, requests the key $k$ from $\mathcal{K}$, sends $\tau'$ and $U_2D_k$ to $\mathcal{S}^\mathcal{Q}_2$, and receives a response $\varrho'_{c_2}$, which it outputs at its outer interface. $\varrho'_{c_2}$ is equal to $\ket{\text{err}}\bra{\text{err}}$ if the state has been tampered on the $\mathcal{C}^{\textnormal{q-insec}},$ otherwise it is equal to $U_{c_2}U_{c_1}\rho_{c_1}.$
 Combining Eqs.(\ref{eq:converter1}), (\ref{eq:ideal-interghange1}) and (\ref{eq:SQ1}) we get:
 \begin{equation}
\pi_{c_1}^{\textnormal{e}'} \pi_{c_2}^{\textnormal{d}'}(\mathcal{K}||\mathcal{C}^{\textnormal{q-insec}}||\mathcal{C}^{\textnormal{c-auth}}||\mathcal{S}^\mathcal{Q}_1 ||\mathcal{S}^\mathcal{Q}_2)\underset{\epsilon^{\textnormal{q-sec}}}{\approx} \mathcal{S}'\sigma^{\textnormal{q-sec}}_s.
 \label{eq1}
 \end{equation}
3) The problem we have now is that $\mathcal{S}^\mathcal{Q}_1$ and $\mathcal{S}^\mathcal{Q}_2$ are not the ideal resources $\mathcal{S}^{\textnormal{bv}}$ constructed by a BV-DQC protocol, because an error might occur during the computation. So $\mathcal{S}^\mathcal{Q}_1$ and $\mathcal{S}^\mathcal{Q}_1$ should be modified by giving the server the power to flip a switch and produce an error instead of the correct outcome. Now we have $\mathcal{S}^{\textnormal{bv}}$. Also, we need to modify $\pi^{e'}_{c_1}$ because now $\text{Client}_1$ could receive an error from $\mathcal{S}^{\textnormal{bv}}_1$. So the new converter $\pi^{e''}_{c_1}$ behaves identically to $\pi^{e'}_{c_1}$, except when it gets an error from $\mathcal{S}^\textnormal{bv}_1$, it sends $e=1$ to the other client.  Also $\pi^{d''}_{c_1}$ is the same as $\pi^{d'}_{c_1}$ but when it receives $e=1,$ it outputs $\ket{\text{err}}\bra{\text{err}}.$

 We also change the ideal resource $S'$ slightly. $S^{2\textnormal{-bv}}$ is identical to $S',$ the only difference is that it also leaks $\ell^{\textnormal{bv}}_1$ and $\ell^{\textnormal{bv}}_2$, and whenever there is an error in computation it outputs $\ket{\text{err}}\bra{\text{err}}$ at the $C_2$ interface.
We also have to modify the simulator $\sigma^{\textnormal{q-sec}}_s$ as follows:
$\sigma^{\textnormal{q-sec-m}}_s$ behaves the same as $\sigma^{\textnormal{q-sec}}_s,$ but it also outputs the leak of both computations $\ell^{\textnormal{bv}}_1$ and $\ell^{\textnormal{bv}}_2,$
and waits for the server to anounce if it does the correct computation or not. If so, $\sigma^{\textnormal{q-sec-m}}$ tells $\mathcal{S}^\textnormal{2-bv}$ to output the correct value, otherwise $\ket{\text{err}}\bra{\text{err}}$ at the $C_2$ interface.
We now need to prove that 
\begin{equation}
\pi_{c_1}^{\textnormal{e}''}\pi_{c_2}^{\textnormal{d}''}(\mathcal{K}||\mathcal{C}^{\textnormal{q-insec}}||\mathcal{C}^{\textnormal{c-auth}}||\mathcal{S}^\textnormal{bv}_1 ||\mathcal{S}^{\textnormal{bv}}_2)\underset{\epsilon^{\textnormal{q-sec}}}{\approx} \mathcal{S}^{\textnormal{2-bv}}\sigma^{\textnormal{q-sec-m}}_s.
\label{eq:S-bv1}
\end{equation}
Eq.~(\ref{eq:S-bv1}) follows from  Eq.~(\ref{eq1}), because an optimal strategy for the distinguisher consists of allowing the correct computation to be performed, which is the case covered by Eq.~(\ref{eq1}). If the distinguisher provokes an error, then $\text{Client}_2$ will simply output an error, and the distinguisher has less information available to try to distinguish the real from the ideal systems.  Also revealing the accept/reject bit does not change the distance
because the distinguisher has access to
all interfaces, which contain the
accept/reject information.
Based on Eq.~(\ref{eq:non-inc-conv}) plugging simulators $\sigma^{\textnormal{bv}}_s$ on the real and ideal systems cannot increase their distance, so we have
\begin{equation}
\pi_{c_1}^{\textnormal{e}''}\pi_{c_2}^{\textnormal{d}''}(\mathcal{K}||\mathcal{C}^{\textnormal{q-insec}}||\mathcal{C}^{\textnormal{c-auth}}||\mathcal{S}^{\textnormal{bv}}_1 ||\mathcal{S}^{\textnormal{bv}}_2)\sigma_{s}^{\textnormal{bv}}\sigma^{\textnormal{bv}}_{s}\underset{\epsilon^{\textnormal{q-sec}}}{\approx}\mathcal{S}^{\textnormal{2-bv}}\sigma^{\textnormal{q-sec-m}}\sigma_{s}^{\textnormal{bv}}\sigma_{s}^{\textnormal{bv}}.
\label{eq3}
\end{equation}
Now  we can move the brackets around in the left-hand side of Equation~(\ref{eq3}), which could equivalently have been written:
\begin{equation}
\pi^{e''}_{c_1}\pi^{d''}_{c_2}(\mathcal{K}||\mathcal{C}^{\textnormal{q-insec}}||\mathcal{C}^{\textnormal{c-auth}}||\mathcal{S}^\textnormal{bv}_1\sigma^{\textnormal{bv}}_s ||\mathcal{S}^{\textnormal{bv}}_2\sigma^{\textnormal{bv}}_s)\underset{\epsilon^{\textnormal{q-sec}}}{\approx}\mathcal{S}^{2\textnormal{-bv}}\sigma^{\textnormal{q-sec-m}}_s\sigma^{\textnormal{bv}}_s\sigma^{\textnormal{bv}}_s.
\label{1-Eq:8}
\end{equation}
The final step is to replace $\mathcal{S}^\textnormal{bv}_1\sigma^{\textnormal{bv}}_s$ and $\mathcal{S}^\textnormal{bv}_2\sigma^{\textnormal{bv}}_s$ by the real BV-DQC protocols.
It is shown in Figure~\ref{fig:whole-conc}. Combining Eqs,~(\ref{BV-DQC}) and~(\ref{1-Eq:8}) we obtain:
\[\pi^{e''}_{c_1}\pi^{d''}_{c_2}(\mathcal{K}||\mathcal{C}^{\textnormal{q-insec}}||\mathcal{C}^{\textnormal{c-auth}}||\pi_{c_{1}}^\textnormal{bv}\mathcal{R}_1||\pi_{c_{2}}^\textnormal{bv}\mathcal{R}_2)\underset{\epsilon^{\textnormal{q-sec}}+2\epsilon^{\textnormal{bv}}}{\approx} \mathcal{S}^{2\textnormal{-bv}}\sigma^{\textnormal{q-sec-m}}_s\sigma^{\textnormal{bv}}_s\sigma^{\textnormal{bv}}_s.\]
Moving brackets around we get:
 \[\pi_{c_1}^{e''}\pi_{c_2}^{d''}\pi_{c_{1}}^\textnormal{bv}\pi_{c_{2}}^\textnormal{bv}(\mathcal{K}||\mathcal{C}^{\textnormal{q-insec}}||\mathcal{C}^{\textnormal{c-auth}}||\mathcal{R}_1||\mathcal{R}_2)\underset{\epsilon^{\textnormal{q-sec}}+2\epsilon^{\textnormal{bv}}}{\approx} \mathcal{S}^\textnormal{2-bv}\sigma^{\textnormal{q-sec-m}}_s\sigma^{\textnormal{bv}}_s\sigma^{\textnormal{bv}}_s.\] Figure~\ref{fig:whole-conc} demonstrates the concrete protocol and resources.
Simply put, the composition of the authentication protocol and the BV-DQC protocols constructs the 2-client BV-DQC resource $\mathcal{S}^\textnormal{2-bv}$ with error equal to $\epsilon^{\textnormal{q-sec}}+2\epsilon^{\textnormal{bv}}$. The simulator $\sigma^\textnormal{2-bv}_s$ is equal to $\sigma^{\textnormal{q-sec-m}}_s\sigma^{\textnormal{bv}}_s\sigma^{\textnormal{bv}}_s.$

  \begin{figure}[!h]
\centering  \includegraphics[width=\columnwidth]{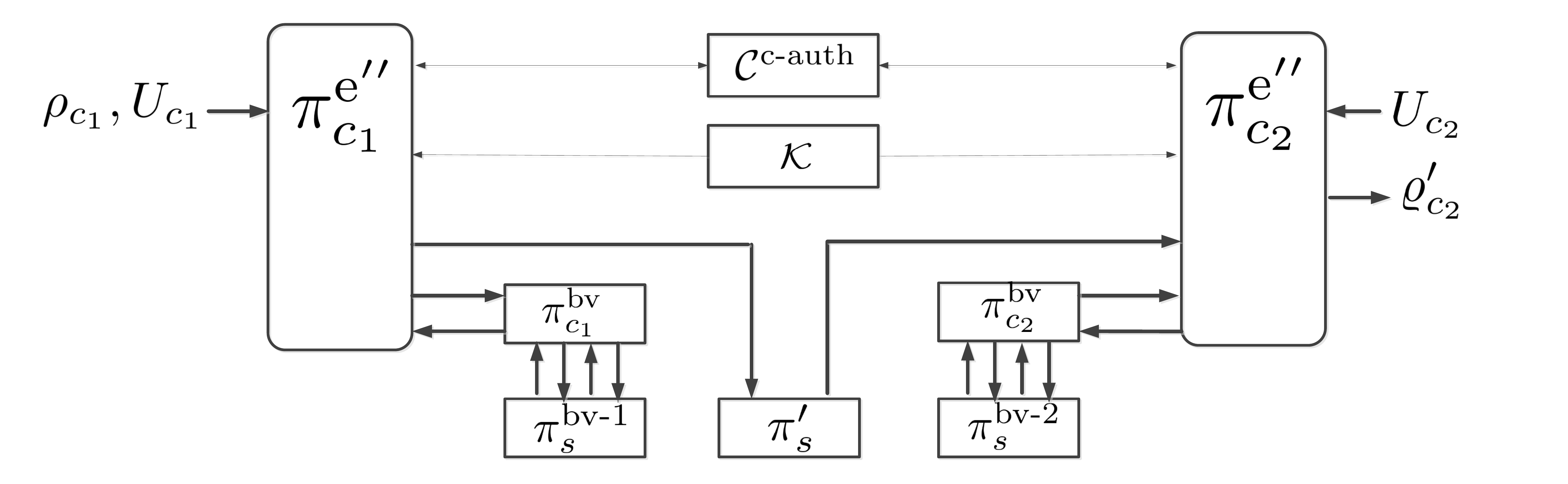}
\caption{The concrete protocol for implementing the example in Figure~\ref{fig:SampleMBQC}}
 \label{fig:whole-conc}
\end{figure}

The security proof can be directly extended to the case of a generic multi party DQC. In this case, the total error, $\epsilon^{\textnormal{n-bv}}$ is the sum of the errors of all runs of BV-DQC and QAS. In the case of fully connected network with $n$ clients and each with $m$ unitaries, it is equal to:
\[\epsilon^{\textnormal{n-bv}}=(mn)\epsilon^{\textnormal{bv}}+n(n-1)(m-1)\epsilon^{\textnormal{q-sec}},\]
where $mn$ is the number of unitaries performed by the server, and $n(n-1)(m-1)$ is the number of sets of non-common qubits which require quantum authentication codes to be protected.\;\;\;\; \;\;\;\; \;\;\;\; \;\;\;\;\;\;\;\; \;\;\;\; \;\;\;\; \;\;\;\;\;\;\;\; \;\;\;\; \;\;\;\; \;\;\;\;\;\;\;\; \;\;\;\; \;\;\;\;\;\;\;\;\;\;\;$\blacksquare$

\end{proof}

\section{Reducing the quantum communications}
\label{sec:RQC}
In this section, we make an additional assumption about BV-DQC scheme which results in reducing the quantum communication between the clients and the server. To achieve this goal, first we define the ideal resource $\mathcal{S}^{\textnormal{bb}}$ which is used as a building block in the desirable particular structure of the BV-DQC protocol.

   \begin{definition}
 The ideal DQC resource $\mathcal{S}^{\textnormal{bb}}$ has two interfaces $C$ and $S$ standing for the client and the server. It takes $U$ and $k'$ at the $C$ interface and $E_{k'}\rho$ at the $S$ interface, where $E_{k'}$ is an authentication map on qubits in $\rho$. The server also has a dishonest interface. It takes a filtered controlled bit $f^{\textnormal{bb}}$ at the $S$ interface.
 If server sets $f^{\textnormal{bb}}=0,$ $\mathcal{S}^{\textnormal{bb}}$ picks a uniformly random key $k$, outputs $k$ at the $C$ interface and applies
 $E_{k}UD_{k'}$ to
  $E_{k'}\rho$ and
  produces the honest result
 $E_{k}U\rho$ at the $S$ interfac. If $f^{\textnormal{bb}}=1,$ the resource outputs the allowed leakage $\ell^{\textnormal{bb}}$ at the $S$ interface and some fixed error state $\chi=\ket{\emph{err}}\bra{\emph{err}}$
  orthogonal to the output space at the $C$ interface.
   \end{definition}
Now suppose that the BV-DQC protocol $\pi^{\textnormal{bv}}$ is composed of the following converters as Figure~\ref{fig:bb} shows:
\begin{equation}
\pi^{\textnormal{bv}}=\pi^{\textnormal{e}}_c\pi_s^{\textnormal{bb}}\pi_c^{\textnormal{bb}}\pi^{\textnormal{d}}_c,
\end{equation}
 which means that client's part of the protocol, $\pi^{\textnormal{bv}}_c$, is decomposed into three parts; i) $\pi^{\textnormal{e}}_c,$ ii) $ \pi^{\textnormal{bb}}_ c,$ and iii) $\pi^{\textnormal{d}}_c$ as introduced in the following:

\begin{enumerate}[label=\roman*)]
  \item  $\pi^{\textnormal{e}}_c$ receives $\rho$ and $U$ at its outside interface. It requests a key $k'$ and sends $E_{k'}\rho$ to the server.
  \item $\pi^{\textnormal{bb}}_c$ receives $U$ and $k'$ from $\pi^{\textnormal{e}}_c$. $\pi_s^{\textnormal{bb}}$ receives $E_{k'}\rho$ on its inside interface. $\pi_s^{\textnormal{bb}}$ and $\pi_c^{\textnormal{bb}}$ communicate with each other through $\mathcal{R}$ to run the bulk of the DQC protocol, where $\pi^{\textnormal{e}}_c \pi^{\textnormal{bb}}_c\pi^{\textnormal{bb}}_s $ constructs $\pi^{\textnormal{e}}_c\mathcal{S}^{\textnormal{bb}}$ within $\epsilon^{\textnormal{bb}}$, i.e.,
      \begin{equation}
\pi^{\textnormal{e}}_c\pi_c^{\textnormal{bb}}\mathcal{R}\pi_s^{\textnormal{bb}}\underset{\epsilon^{\textnormal{bb}}}{\approx} \pi^{\textnormal{e}}_c \mathcal{S}^{\textnormal{bb}}\perp_s,\pi^{\textnormal{e}}_c \pi_c^{\textnormal{bb}}\mathcal{R}\underset{\epsilon^{\textnormal{bb}}}{\approx} \pi^{\textnormal{e}}_c\mathcal{S}^{\textnormal{bb}}\sigma^{\textnormal{bb}}_s,
 \label{BB-DQC}
\end{equation}
therefore we can replace $\pi_c^{\textnormal{bb}}\mathcal{R}\pi_s^{\textnormal{bb}}$ by $\mathcal{S}^{\textnormal{bb}}$ in any context where the client runs $\pi^{\textnormal{e}}_c$. In Eq. (\ref{BB-DQC}), we used the idea of context-restricted indifferentiability, which allows us to model a resource that does not compose generally but can only be used within a well-specified set of contexts \cite{cryptoeprint:2017:461}. $\pi_c^{\textnormal{bb}}$ passes two values to  $\pi^{\textnormal{d}}_c;$ the key $k$ and a bit which indicates whether any cheating has been detected so far.
  \item
  If the server behaves honestly, $\pi^{\textnormal{d}}_c$ receives $E_k U \rho$ from the server, it verifies the authenticity of this state, and if this test succeeds and
$\pi^{\textnormal{d}}_c$ did not detect any cheating either, it outputs the result of the decryption (which should be $U\rho$), otherwise it outputs an error.
\end{enumerate}
\begin{figure}[!h]
\centering  \includegraphics[width=4.5 in]{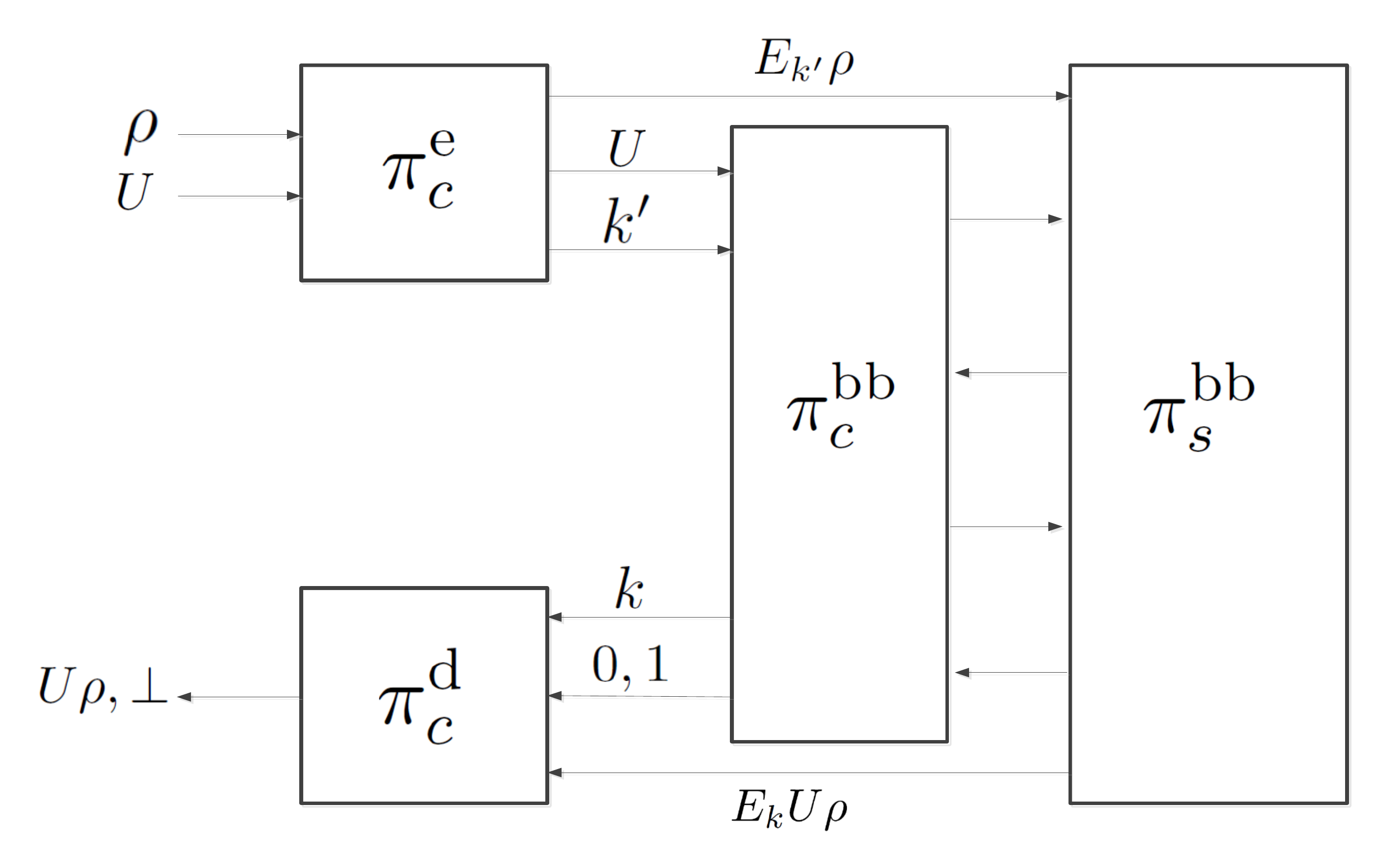}
\caption{ The converter related to the particular type of BV-DQC protocol which results in decreased overhead in the multi-client DQC protocol.}
\label{fig:bb}
\end{figure}
Protocol 3 is a modified blind verifiable multi-client DQC which removes the communications between the server and the clients between the rounds.
 Three types of protocols, BB1-DQC, BB-DQC, and BB2-DQC which are used in Protocol 3, are defined as follows:

\begin{itemize}
\renewcommand{\labelitemi}{$\bullet$}
\item
BB1-DQC($\rho,U$) means $\pi^{\textnormal{e}}_c\pi_s^{\textnormal{bb}}\pi_c^{\textnormal{bb}}$ is run which outputs $E_kU\rho$ and $k,$ if the verification passes.
\item
BB-DQC($E_{k'}\rho,U, k'$) means $\pi_s^{\textnormal{bb}}\pi_c^{\textnormal{bb}}$ is run which outputs $E_kU\rho$ and $k,$ if the verification passes.
\item
BB2-DQC($E_{k'}\rho,U, k'$) means $\pi_s^{\textnormal{bb}}\pi_c^{\textnormal{bb}}\pi_c^{\textnormal{d}}$ is run which outputs $U\rho,$ if the verification passes.
\end{itemize}

\begin{algorithm} [htbp]
\caption{Modified Blind Verifiable Multi-Client DQC (Removing the Communications Between the Clients and the Server)}
\begin{algorithmic}[1]
\STATE  Let $n$ and $m$ be the number of clients and the total number of local unitaries of each client, respectively.
\STATE Let $ i,j\in\{1,\dots,n\}, h\in\{1,\dots,m\}, \text{and}\;h'\in\{1,\dots,m-1\}.$
\STATE Let $\rho_{i}$, $\mathcal{U}_{i}=\{U_{i}^{(h)}\},$ and $\varrho_{i}$ be the input, the set of local unitaries and output of $\text{Client}_{i}$, respectively.
\STATE Let $T_{i \rightarrow j}^{(h')}$ be the set of labels of qubits of $U_{i}^{(h')}$ that are fed into $U_{j}^{(h'+1)}.$

\STATE Let $(E_{k},D_{k})$ be a pair of keyed operations using for authenticating quantum states.
\FOR {$i:=1$ to $n$}
\STATE $\text{Client}_{i}$ and the server run BB1-DQC($\rho_{i}, U_{i}^{(1)}$).
\IF{error is detected}
\STATE   $\text{Client}_{i}$ sends $e=1$ to the other clients through an authenticated classical channel and all clients abort the protocol.
\ELSE
\STATE the protocol outputs $\varrho_i^{(1)}= \mathop {\mathop  \otimes \limits_{j} } E_{k_{i \rightarrow j}^{(1)}} U_{i}^{(1)}\rho_{i}$ and $\{k^{(1)}_{i \rightarrow j}\}_{j\in\{1,...n\}}.$
\FOR {$j:=1$ to $n$}
\STATE $\text{Client}_{i}$ sends $k_{i \rightarrow j}^{(1)}$ to $\text{Client}_{j}.$
\STATE Let $\varrho_{i \rightarrow j}^{(1)}$ be the state of qubits in $T_{i \rightarrow j}^{(1)}.$
\ENDFOR
\ENDIF
\ENDFOR

\FOR {$i:= 1$ to $n$}
\STATE Let $\rho_{i}^{(2)}=\mathop {\mathop  \otimes \limits_{j} } \varrho_{j \rightarrow i}^{(1)}.$
\ENDFOR

\FOR {$h:= 2$ to $m-1$}
\FOR {$i:= 1$ to $n$}
\STATE $\text{Client}_{i}$ and the server run BB-DQC($\rho_{i}^{(h)},  U_i^{(h)} , \{k^{(h-1)}_{j \rightarrow i}\}_{j\in\{1,...n\}}$).

\IF {error is detected}
\STATE   $\text{Client}_{i}$ sends $e=1$ to the other clients through an authenticated classical channel and all clients abort the protocol.
\ELSE
\STATE the protocol outputs $\varrho_{i}^{(h)}=(\mathop {\mathop  \otimes \limits_{j} } E_{k_{i \rightarrow j}^{(h)}})U_{i}^{(h)}(\mathop {\mathop  \otimes \limits_{j} } D_{k_{j \rightarrow i}^{(h-1)}})\rho_{i}^{(h)}$ and $k_{i \rightarrow j}^{(h)}.$
\FOR {$j:=1$ to $n$}

\STATE $\text{Client}_{i}$ sends $k_{i \rightarrow j}^{(h)}$ to $\text{Client}_{j}.$
\STATE Let $\varrho_{i \rightarrow j}^{(h)}$ be the state of qubits in $T_{i \rightarrow j}^{(h)}.$
\ENDFOR
\ENDIF
\ENDFOR
\FOR {$i:= 1$ to $n$}
\STATE Let $\rho_{i}^{(h+1)}=\mathop {\mathop  \otimes \limits_{j} } \varrho_{j \rightarrow i}^{(h)}.$
\ENDFOR
\ENDFOR

\FOR {$i:= 1$ to $n$}
\STATE $\text{Client}_{i}$ and the server run BB2-DQC($\rho_{i}^{(m)},  U_i^{(m)} , \{k^{(m)}_{j \rightarrow i}\}_{j\in\{1,...n\}}$)
\IF {error is detected}
\STATE   $\text{Client}_{i}$ sends $e=1$ to the other clients through an authenticated classical channel and all clients abort the protocol.
\ELSE
\STATE   the protocol outputs $\varrho_{i}^{(m)}=U_i^{(m)}(\mathop {\mathop  \otimes \limits_{j} } D_{k_{j \rightarrow i}^{(m-1)}})\rho_i^{(m)} $
\ENDIF
\ENDFOR

\end{algorithmic}
\end{algorithm}

 \subsection{Security Proof}
\begin{theorem}
Protocol 3 constructs $\mathcal{S}^{n\textnormal{-bv}}$ using resources $\mathcal{K},$ $\mathcal{R}_1,$ $\mathcal{R}_2$ and $\mathcal{C}^{\textnormal{c-auth}}_{\textnormal{a}},$
 with error 
 $\epsilon^{\textnormal{n-bv}}=(mn)\epsilon^{\textnormal{bb}}+n(n-1)(m-1)\epsilon^{\textnormal{q-sec}}+2n\epsilon^{\textnormal{q-sec}},$ where $n$ and $m$ are the numbers of clients and each client's unitaries respectively.
\end{theorem}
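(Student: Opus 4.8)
The plan is to follow the same hybrid-argument skeleton as the proof of Theorem~1, but to exploit the finer decomposition $\pi^{\textnormal{bv}}=\pi^{\textnormal{e}}_c\pi_s^{\textnormal{bb}}\pi_c^{\textnormal{bb}}\pi^{\textnormal{d}}_c$ so that each \emph{bulk} computation is traded for the ideal resource $\mathcal{S}^{\textnormal{bb}}$ (via Eq.~(\ref{BB-DQC})) while the authentication steps are traded for the secure channel $\mathcal{C}^{\textnormal{q-sec}}$ (via Eq.~(\ref{sim-auth})). As in Theorem~1 I would first establish the claim on a minimal instance with two clients and the layered flow of Figure~\ref{fig:SampleMBQC}, then observe that the argument extends term by term. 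I would start from the real system $\pi_{c_1}^{n\textnormal{-bv}}\cdots\pi_{c_n}^{n\textnormal{-bv}}\pi_s^{n\textnormal{-bv}}(\mathcal{K}||\mathcal{C}^{\textnormal{c-auth}}_{\textnormal{a}}||\mathcal{R}_1||\cdots||\mathcal{R}_n)$ and rewrite each client's converter as the composition of its per-layer pieces BB1-DQC, BB-DQC and BB2-DQC, together with the classical key exchanges carried over $\mathcal{C}^{\textnormal{c-auth}}_{\textnormal{a}}$.

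The first family of hybrid steps replaces, one run at a time, each bulk interaction $\pi_c^{\textnormal{bb}}\mathcal{R}_i\pi_s^{\textnormal{bb}}$ by $\mathcal{S}^{\textnormal{bb}}$. Since the distinguishing advantage is non-increasing under composition (Eqs.~(\ref{eq:non-inc-conv})--(\ref{eq:non-inc-res})) and obeys the triangle inequality (Eq.~(\ref{eq:triangle-ineq})), each replacement contributes at most $\epsilon^{\textnormal{bb}}$, and there are exactly $mn$ of them (one per client per layer), giving the summand $(mn)\epsilon^{\textnormal{bb}}$. The second family replaces each quantum-authenticated piece by $\mathcal{C}^{\textnormal{q-sec}}$, merging the local unitary converters into the ideal resource exactly as in Eqs.~(\ref{eq:converter1})--(\ref{eq:ideal-interghange1}). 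Here two kinds of authentication arise: the $n(n-1)(m-1)$ inter-client qubit sets that, in the optimised protocol, remain encrypted on the (untrusted) server across the $m-1$ layer transitions, and the $2n$ authentications coming from encoding each client's input at the first layer (the $\pi^{\textnormal{e}}_c$ inside BB1-DQC) and verifying/decoding each client's output at the last layer (the $\pi^{\textnormal{d}}_c$ inside BB2-DQC). Each contributes $\epsilon^{\textnormal{q-sec}}$, for a total of $\bigl(n(n-1)(m-1)+2n\bigr)\epsilon^{\textnormal{q-sec}}$. Finally the ideal resources and accumulated simulators are merged into $\mathcal{S}^{n\textnormal{-bv}}\sigma^{n\textnormal{-bv}}_s$, and the error-mode bookkeeping (broadcast of $e=1$ and global abort to $\ket{\text{err}}\bra{\text{err}}$) is handled as in Theorem~1: the distinguisher's best strategy is to let the honest computation run, since provoking an abort only removes available information, and exposing the accept/reject bit costs nothing because all interfaces are already visible. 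Summing the two families via the triangle inequality yields $\epsilon^{\textnormal{n-bv}}=(mn)\epsilon^{\textnormal{bb}}+n(n-1)(m-1)\epsilon^{\textnormal{q-sec}}+2n\epsilon^{\textnormal{q-sec}}$.

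The step I expect to be the main obstacle is justifying the substitution $\pi_c^{\textnormal{bb}}\mathcal{R}_i\pi_s^{\textnormal{bb}}\to\mathcal{S}^{\textnormal{bb}}$ at the intermediate and final layers. Equation~(\ref{BB-DQC}) is only a \emph{context-restricted} indifferentiability statement: it licenses the replacement solely in a context where the client has already run $\pi^{\textnormal{e}}_c$, i.e.\ where the state handed to the server is a genuinely authenticated $E_{k'}\rho$. I would therefore verify, by induction on the layer index $h$, that this precondition holds at every layer: the output of each $\mathcal{S}^{\textnormal{bb}}$ at layer $h$ has the form $E_{k}U\rho$, the freshly sampled key $k$ is delivered to the recipient client over $\mathcal{C}^{\textnormal{c-auth}}_{\textnormal{a}}$, and hence that client can supply the matching $D_{k}$ so that the input fed to the next layer's $\mathcal{S}^{\textnormal{bb}}$ is again a properly encoded state lying in the admissible context. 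Making this chaining precise, and checking that restricting to these contexts is compatible with the parallel composition over the $n$ clients, is where the genuine care is required; once that is in place, the remaining error accounting is a routine triangle-inequality summation over the $mn+n(n-1)(m-1)+2n$ elementary hybrids.
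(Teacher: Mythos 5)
Your proposal is correct and follows essentially the same route as the paper: a two-client warm-up followed by the same hybrid argument, trading each of the $mn$ bulk runs for $\mathcal{S}^{\textnormal{bb}}$ via the context-restricted statement in Eq.~(\ref{BB-DQC}) and each of the $n(n-1)(m-1)+2n$ authentications for $\mathcal{C}^{\textnormal{q-sec}}$ via Eq.~(\ref{sim-auth}), with the identical error bookkeeping. Your explicit induction on the layer index to certify that every $\mathcal{S}^{\textnormal{bb}}$ is invoked in an admissible context is a point the paper only gestures at, but it is a refinement of the same argument rather than a different one.
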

\begin{proof}
 Again, we first prove the security for the example in Figure \ref{fig:SampleMBQC} with two clients. We show that Protocol 4 constructs $\mathcal{S}^{2\textnormal{-bv}}$ using $\mathcal{K},$ $\mathcal{R}_1,$ $\mathcal{R}_2$ and $\mathcal{C}^{\textnormal{c-auth}},$
 with error $\epsilon^{2\textnormal{-bv}}=2\epsilon^{\textnormal{bb}}+3\epsilon^{\textnormal{q-sec}}.$ The proof is easily extended to the general case. 
 The proof is essentially similar to the proof of the previous section. First, we suppose the parties have access to the quantum capabilities, but we decompose the protocol in a slightly different way of the previous section. The protocol is composed of the following converters:
\[\pi_{c_1}^{\textnormal{e}_1}\pi_{c_1}^{\textnormal{d}_1}\pi_{c_1}^{\textnormal{u}_1}\pi^{\textnormal{e}_2}_{c_1}\pi^{\textnormal{d}_2}_{c_2}\pi_{c_2}^{\textnormal{u}_2} \pi_{c_2}^{\textnormal{e}_3}\pi_{c_2}^{\textnormal{d}_3}.\]
The converter $\pi^{\textnormal{e}_1}_{c_1}$ receives $\rho_{c_1}$ at its outside interface. It requests a key $k_1$ from $\mathcal{K}$. It applies the map $E_{k_1}$ to $\rho_{c_1}$ which outputs the state $\tau_{1}.$
$\pi_{c_1}^{\textnormal{d}_1}$ receives $\tau'_{1}$ at its outside interface. It receives the key $k_1$ from $\mathcal{K}$ and applies $D_{k_1}$ which outputs $\rho_{c_1}$ or $\ket{\text{err}}\bra{\text{err}},$ depending on whether verification succeeds or fails. If verification succeeds, the converter $\pi_{c_1}^{\textnormal{u}_1}$ receives $U_{c_1}$ and $\rho_{c_1}$ at its outside and inside interfaces respectively and outputs $U_{c_1}\rho_{c_1}.$
The converter $\pi^{\textnormal{e}_2}_{c_1}$ receives $U_{c_1}\rho_{c_1}$ at its inside interface. It requests a key $k_2$ from $\mathcal{K}$ and outputs $\tau=E_{k_2}U_{c_1}\rho_{c_1}$ through $\mathcal{C}^{\textnormal{q-insec}}.$
 The converter $\pi^{\textnormal{d}_2}_{c_2}$ receives the key $k_2$ from $\mathcal{K}$ and a (possibly) tampered state $\tau'$ and applies verification map $D_{k_2}$. It outputs the state $U_{c_1}\rho_{c_1}$ if verification succeeded, or $\ket{\text{err}}\bra{\text{err}}$ otherwise.
If verification succeeded, the converter $\pi_{c_2}^{\textnormal{u}_2}$ receives $U_{c_1}\rho_{c_1}$ at its inside interface and $U_{c_2}$ at its outside interface and outputs $U_{c_2}U_{c_1}\rho_{c_1}$. $\pi^{\textnormal{e}_3}_{c_2}$ requests a key $k_3$ from $\mathcal{K}$, applies $E_{k_3}$ and outputs $E_{k_3}U_{c_2}U_{c_1}\rho_{c_1}.$
$\pi^{\textnormal{d}_3}_{c_2}$ receives a possibly tampered state and the key $k_3$ from $\mathcal{K}.$ It outputs $\varrho_{c_2}=U_{c_2}U_{c_1}\rho_{c_1}$ if verification passes, otherwise it outputs $\ket{\text{err}}\bra{\text{err}}$.
 Based on Eq.(\ref{sim-auth}), we have:

\begin{eqnarray*}
  \pi_{c_1}^{\textnormal{u}_1}\pi_{c_2}^{\textnormal{u}_2}\pi_{c_1}^{\textnormal{e}_1}\pi_{c_2}^{\textnormal{d}_1} \pi^{\textnormal{e}_2}_{c_1}\pi^{\textnormal{d}_2}_{c_2}\pi^{\textnormal{e}_3}_{c_1}\pi^{\textnormal{d}_3}_{c_2}(\mathcal{K}||\mathcal{C}^{\textnormal{q-insec}}||\mathcal{C}^{\textnormal{c-auth}})\underset{3\epsilon^{\textnormal{q-sec}}}{\approx}\\
    \pi_{c_1}^{\textnormal{u}_1}\pi_{c_2}^{\textnormal{u}_2} \mathcal{C}^{\textnormal{q-sec}}\sigma^{\textnormal{q-sec}}_s\mathcal{C}^{\textnormal{q-sec}}\sigma^{\textnormal{q-sec}}_s\mathcal{C}^{\textnormal{q-sec}}\sigma^{\textnormal{q-sec}}_s.
\end{eqnarray*}
Now we have:
 \begin{equation}
\pi_{c_1}^{\textnormal{u}_1}\pi_{c_2}^{\textnormal{u}_2} \mathcal{C}^{\textnormal{q-sec}}\sigma^{\textnormal{q-sec}}_s\mathcal{C}^{\textnormal{q-sec}}\sigma^{\textnormal{q-sec}}_s\mathcal{C}^{\textnormal{q-sec}}\sigma^{\textnormal{q-sec}}_s= \mathcal{S}'\sigma^{\textnormal{q'-sec}}_s\sigma^{\textnormal{q-sec}}_s\sigma^{\textnormal{q-sec}}_s,
 \label{eq:ideal-interghange}
 \end{equation}
 where $\mathcal{S}' $ is the ideal channel where $\pi_{c_1}^{\textnormal{u}_1}$ and $\pi_{c_2}^{\textnormal{u}_2}$ have been merged with it. $\sigma^{\textnormal{q'-sec}}_s$ behaves the same as $\sigma^{\textnormal{q-sec}}_s,$ but it also outputs the leak of both communications $\ell^{\textnormal{q-sec}}_1$ and $\ell^{\textnormal{q-sec}}_2,$
and waits to check if the messages have been sent or not. If so, $\sigma^{\textnormal{q'-sec}}$ tells $\mathcal{S}'$ to output the correct value, otherwise $\ket{\text{err}}\bra{\text{err}}$ at the $C_2$ interface.
We move forward to the real case where the parties do not have quantum capabilities and delegate the computations (except $E_{k_1}$ and $D_{k_3}$) to the resource $S^{\textnormal{bb}}.$ With the similar analyses of the security proof in the previous section, we have:
\begin{equation}
\pi_{c_1}^{\textnormal{e}_1}\pi_{c_2}^{\textnormal{d}_3}(\mathcal{K}||\mathcal{C}^{\textnormal{q-insec}}||\mathcal{C}^{\textnormal{c-auth}}||\mathcal{S}^\textnormal{bb}_1 ||\mathcal{S}^{\textnormal{bb}}_2)\underset{3\epsilon^{\textnormal{q-sec}}}{\approx} \mathcal{S}^{\textnormal{2-bv}}\sigma^{\textnormal{q''-sec}}_s
\sigma^{\textnormal{q-sec}}_s\sigma^{\textnormal{q-sec}}_s.
\label{eq:S-bv}
\end{equation}
The simulator $\sigma^{\textnormal{q"-sec}}_s$ behaves the same as $\sigma^{\textnormal{q'-sec}}_s,$ but it also outputs the leak of both computations $\ell^{\textnormal{bb}}_1$ and $\ell^{\textnormal{bb}}_2$
and waits for the server to say if it does the correct computation or not. If so, $\sigma^{\textnormal{q"-sec}}$ tells $\mathcal{S}^\textnormal{2-bv}$ to output the correct value, otherwise $\ket{\text{err}}\bra{\text{err}}$ at the $C_2$ interface.
Now, plugging simulators $\sigma^{\textnormal{bb}}$ on the real and ideal systems does not increase their distance:
\begin{equation}
\pi_{c_1}^{\textnormal{e}_1}\pi_{c_2}^{\textnormal{d}_3}(\mathcal{K}||\mathcal{C}^{\textnormal{q-insec}}||\mathcal{C}^{\textnormal{c-auth}}||\mathcal{S}^\textnormal{bb}_1\sigma^{\textnormal{bb}}_s ||\mathcal{S}^{\textnormal{bb}}_2\sigma^{\textnormal{bb}}_s)\underset{3\epsilon^{\textnormal{q-sec}}}{\approx}\mathcal{S}^{2\textnormal{-bv}}\sigma^{\textnormal{q''-sec}}_s\sigma^{\textnormal{q-sec}}_s\sigma^{\textnormal{q-sec}}_s\sigma^{\textnormal{bb}}_s\sigma^{\textnormal{bb}}_s
\label{Eq:8}
\end{equation}
The final step is to replace $\mathcal{S}^\textnormal{bb}\sigma^{\textnormal{bb}}$ with the real BB-DQC protocols,
$\pi^{\textnormal{bb}}=(\pi_c^{\textnormal{bb}},\pi_s^{\textnormal{bb}})$:
 \[\pi_{c_1}^{\textnormal{e}_1}\pi_{c_{1}}^\textnormal{bb}\pi_{c_{2}}^\textnormal{bb}\pi_{c_2}^{\textnormal{d}_3}(\mathcal{K}||\mathcal{C}^{\textnormal{q-insec}}||\mathcal{C}^{\textnormal{c-auth}}||\mathcal{R}_1||\mathcal{R}_2)\underset{3\epsilon^{\textnormal{q-sec}}+2\epsilon^{\textnormal{bb}}}{\approx} \mathcal{S}^\textnormal{2-bv}\sigma^{\textnormal{q''-sec}}_s\sigma^{\textnormal{q-sec}}_s\sigma^{\textnormal{q-sec}}_s\sigma^{\textnormal{bb}}_s\sigma^{\textnormal{bb}}_s.\]

\begin{algorithm}[!h]
\caption{Modified Blind Verifiable Two-Client DQC for the Setting in Figure~\ref{fig:SampleMBQC} }
\begin{algorithmic}[1]
\STATE Let $\rho_{c_i}$ and $U_{c_i}$ be the input and the local unitary. 
 of $C_i$, respectively, where $i\in\{1,2\}$
\STATE Let $(E_{k},D_{k})$ be a pair of operations for authenticating.
\STATE $\text{Client}_1$ and the server run BB1-DQC($\rho_{c_1},U_{c_1}$).
\IF {error is detected}
\STATE $\text{Client}_1$ sends $e=1$ to $\text{Client}_2$ through an authenticated classical channel and both outputs $\ket{\text{err}}\bra{\text{err}}$ and abort the protocol.
\ELSE
\STATE the protocol outputs $E_kU_1 \rho_{c_1}$ and $k.$
\STATE $\text{Client}_1$ sends $k$ to $\text{Client}_2.$
\ENDIF
\STATE $\text{Client}_2$ and the server run BB2-DQC($E_kU_1 \rho_{c_1},U_2$) which outputs $\varrho_2.$
\IF {error is detected}
\STATE $\text{Client}_1$ and $\text{Client}_2$ output $\ket{\text{err}}\bra{\text{err}}$
\ELSE
\STATE $\text{Client}_2$ outputs $\varrho_{c_2}=U_2U_1\rho_{c_1}.$
\ENDIF

\end{algorithmic}
\end{algorithm}
So the composition of the authentication protocol and the BV-DQC protocols with the particular structure described above, constructs the 2-client BV-DQC resource, $\mathcal{S}^\textnormal{2-bv}$ with error equal to $2\epsilon^{\textnormal{bb}}+3\epsilon^{\textnormal{q-sec}}$. The security proof can be directly extended to the case of a generic multi party DQC. In this case the total error, $\epsilon^{\textnormal{n-bv}}$ is the sum of the errors of all runs of BV-DQC and QAS. In the case of fully connected network, with $n$ clients and each with $m$ unitaries, it is equal to:
\[\epsilon^{\textnormal{n-bv}}=(mn)\epsilon^{\textnormal{bb}}+(n(n-1)(m-1)+2n)\epsilon^{\textnormal{q-sec}}. \]
$\;\; \;\;\;\;\;\;\;\;\; \;\; \;\;\;\;\;\;\;  \;\;\;\;\;\;\;\;\;\;\;\; \;\;\;\;\;\;\;\;\;\;\;\; \;\;\;\;\; \;\;\;\;\;\;\;\;\;\;\;\; \;\;\;\;\;  \;\;\;\;\; \;\;\;\;\;\;\;\;\;\;\;\; \;\;\;\;\;\;\;\;\;\; \;\;\;\;\;\;\;\;\;\;\;\; \;\;\;\;\;\;\;\;\;\;\;\;\;\; \;\;\blacksquare$

\end{proof}

\section{Conclusion}
\label{sec:conclu}
In this paper, we studied the multi-client DQC problem. In our setting, we considered the global unitary to be made up of local unitaries that are individually decided by the clients. Each client's part is kept secret from the server and the other clients (assuming honest client behaviour). We constructed a verifiable composable secure multi-client DQC scheme from any verifiable composable secure single-client DQC protocols and quantum authentication codes. We proved the security and then optimised the protocol by removing the quantum communication between the clients and the server.

\textit{Acknowledgements --} The authors thank Christopher Portman for useful discussions and Mark M.~Wilde for his comments on an earlier version of this manuscript. The authors acknowledge support from Singapore's Ministry of Education and National Research Foundation, and the Air Force Office of Scientific Research under AOARD grant FA2386-15-1-4082. This material is based on research funded in part by the Singapore National Research Foundation under NRF Award NRF-NRFF2013-01.

\bibliographystyle{splncs03}
\bibliography{apssamp}
\end{document}